%%%% ijcai25.tex
%\typeout{IJCAI--25 Instructions for Authors}
\typeout{IJCAI--25 Submission}

% These are the instructions for authors for IJCAI-25.

\documentclass{article}
\pdfpagewidth=8.5in
\pdfpageheight=11in

% The file ijcai25.sty is a copy from ijcai22.sty
% The file ijcai22.sty is NOT the same as previous years'
\usepackage{ijcai25}

% Use the postscript times font!
\usepackage{times}
\usepackage{soul}
\usepackage{url}
\usepackage[hidelinks]{hyperref}
\usepackage[utf8]{inputenc}
\usepackage[small]{caption}
\usepackage{graphicx}
\usepackage{amsmath}
\usepackage{amsthm}
\usepackage{thmtools}
\renewcommand\thmcontinues[1]{Continued}
\usepackage{booktabs}
\usepackage{algorithm}
\usepackage{algorithmic}
\usepackage[switch]{lineno}

\usepackage{fontawesome}

% Comment out this line in the camera-ready submission
%\linenumbers

\urlstyle{same}

% the following package is optional:
%\usepackage{latexsym}

% See https://www.overleaf.com/learn/latex/theorems_and_proofs
% for a nice explanation of how to define new theorems, but keep
% in mind that the amsthm package is already included in this
% template and that you must *not* alter the styling.
%\newtheorem{example}{Example}
%\newtheorem{theorem}{Theorem}

% Following comment is from ijcai97-submit.tex:
% The preparation of these files was 
% supported by Schlumberger Palo Alto
% Research, AT\&T Bell Laboratories, and Morgan Kaufmann Publishers.
% Shirley Jowell, of Morgan Kaufmann Publishers, and Peter F.
% Patel-Schneider, of AT\&T Bell Laboratories collaborated on their
% preparation.

% These instructions can be modified and used in other conferences as long
% as credit to the authors and supporting agencies is retained, this notice
% is not changed, and further modification or reuse is not restricted.
% Neither Shirley Jowell nor Peter F. Patel-Schneider can be listed as
% contacts for providing assistance without their prior permission.

% To use for other conferences, change references to files and the
% conference appropriate and use other authors, contacts, publishers, and
% organizations.
% Also change the deadline and address for returning papers and the length and
% page charge instructions.
% Put where the files are available in the appropriate places.

% PDF Info Is REQUIRED.

% Please leave this \pdfinfo block untouched both for the submission and
% Camera Ready Copy. Do not include Title and Author information in the pdfinfo section
\pdfinfo{
/TemplateVersion (IJCAI.2025.0)
}

\title{
% On the Complexity of Verifying Quantized Graph Neural Networks (with Linear Constraints)
%The Complexity of Verifying Quantized Graph Neural Networks\\ with Linear Constraints is PSPACE-complete
%The Complexity of 
Verifying Quantized Graph Neural Networks is PSPACE-complete
}

% Single author syntax
\author{
\#4158
    % Author Name
    % \affiliations
    % Affiliation
    % \emails
    % email@example.com
}

% Multiple author syntax (remove the single-author syntax above and the \iffalse ... \fi here)
%\iffalse
\author{
Marco Sälzer$^1$
\and
François Schwarzentruber$^2$\and
Nicolas Troquard$^{3}$
\affiliations
$^1$ Technical University of Kaiserslautern, Kaiserslautern, Germany \\
$^2$ENS de Lyon, CNRS, Université Claude Bernard Lyon 1, Inria, LIP, UMR 5668, 69342, Lyon, France \\
$^3$Gran Sasso Science Institute (GSSI), Viale F.~Crispi, 7 -- 67100 L'Aquila, Italy
\emails
marco.saelzer@rptu.de, 
francois.schwarzentruber@ens-lyon.fr,
nicolas.troquard@gssi.it
}
%\fi

% Useful packages
%\usepackage{natbib}
\usepackage{amsmath}
\usepackage{amsthm}
\usepackage{amssymb}
\usepackage{graphicx}
\usepackage{tikz}
\usepackage{pgfplots}

\usepackage{todonotes}

\usepackage{bussproofs}
\usepackage{cleveref}
\usepackage{yfonts}
\usepackage{ifthen}
\usepackage{adjustbox}
\usepackage{comment}
\usepackage{multicol}
\usepackage{fontawesome}
\usepackage{blkarray}
\usepackage{bussproofs}
\usepackage{xcolor}

\newcommand{\citet}[1]{\citeauthor{#1}~\shortcite{#1}}

\newcommand{\rulelabel}[1]{
\LeftLabel{$(#1)$}
}
\usetikzlibrary{decorations.markings}
\usetikzlibrary{calc,patterns,angles,quotes}

\newcommand\indexalert\emph
\newcommand\alert\emph

%%%%%%%%%%%%%%%%%%%%%%%%%%%%%%

%\usepackage{ntheorem}
%\theorembodyfont{\upshape}

%%%%%%%%%%%%%%%%%%%%%%%%%%%%%%%%%%%%%%%
%%%%%%%% AGENTS %%%%%%%%%%%%%%%%%%%%%%%
%%%%%%%%%%%%%%%%%%%%%%%%%%%%%%%%%%%%%%%

 % father in the muddy children

\newcommand{\changed}[1]{#1}%\textcolor{red}{#1}}

%%%%%%%%%%%%%%%%%%%%%%%%%%%%%%%%%%%%%%%
%%%%%%%%%% PROPOSITIONS %%%%%%%%%%%%%%%
%%%%%%%%%%%%%%%%%%%%%%%%%%%%%%%%%%%%%%%

%%%%%%%%%%%%%%%%%%%%%%%%%%%%%%%%%%%%%%%
%%%%%%%%%%%%%%% LANGUAGES %%%%%%%%%%%%%
%%%%%%%%%%%%%%%%%%%%%%%%%%%%%%%%%%%%%%%

\protected\def\MSO{\ifmmode \mbox{\sc MSO} \else {\sc MSO}\xspace\fi}
\protected\def\FO{\ifmmode \mbox{\sc FO} \else {\sc FO}\xspace\fi}
\protected\def\MMSO{\ifmmode {\mbox{\sc M}\MSO} \else {{\sc M}\MSO}\xspace\fi}
\protected\def\MFO{\ifmmode \mbox{\sc MFO} \else {\sc MFO}\xspace\fi}

%% rajouté par Sophie

%% transducers

 % linear-time mu-calculus
 % branching linear-time mu-calculus
 % branching linear-time mu-calculus
%\newcommand{\blmulink}{\text{\textsf{L}$^*${lin$_{\mu}$K}} % branching linear-time mu-calculus
%\newcommand{\muc}{{\mu\textsf{CTL}}}

%% Buchi

%%%%%%%%%%%%%%%%%%%%%%%%%%%%%%%%%%%%%%%
%%%%%%EPISTEMIC LOGIC SEMANTICS%%%%%%%%
%%%%%%%%%%%%%%%%%%%%%%%%%%%%%%%%%%%%%%%

%%%%%%%%%%%%%%%%%%%%%%%%%%%%%%%%%%%%%%%
%%%%%%SUCCINCT KRIPKE MODELS %%%%%%%%%%
%%%%%%%%%%%%%%%%%%%%%%%%%%%%%%%%%%%%%%%

%%%%%%%%%%%%%%%%%%%%%%%%%%%%%%%%%%%%%%%
%%%%%%SUCCINCT EVENT MODELS %%%%%%%%%%
%%%%%%%%%%%%%%%%%%%%%%%%%%%%%%%%%%%%%%%

%%%%%%%%%%%%%%%%%%%%%%%%%%%%%%%%%%%%%%%
%%%%%%SUCCINCT PRODUCT %%%%%%%%%%%%%%%%
%%%%%%%%%%%%%%%%%%%%%%%%%%%%%%%%%%%%%%%

%%%% SETS %%%%%%%%%%%%%%%%%%%%
\newcommand{\union}{\cup}

\newcommand{\compl}[1]{{\overline{{#1}}}}

\newcommand{\set}[1]{{\{#1\}}}

%%%%% LOGICS %%%%%%%%%%%%%%%%%%%%%%%%
\newcommand{\bigand}{\bigwedge}
\newcommand{\lbigand}{\bigand}

\newcommand{\lequiv}{\leftrightarrow}

\renewcommand{\phi}{\varphi}

\newcommand{\lbox}{\square}

\newcommand{\ldiamond}{\Diamond}

%% SPECIFIC SETS %%%%%%%%%%%%%%%%%%%%%%%%%%

\newcommand{\setR}{\mathbb{R}}

\newcommand{\ensR}{\setR}

%% LOGIC NAMES %%%%%%%%%%%%%%%

%&latex

%%%%%% ENVIRONMENTS %%%%%%%%%%%%%%%%%%%%%%%%%%%
% 
\newtheorem{theorem}{Theorem}
\newtheorem{example}[theorem]{Example}

\newtheorem{definition}[theorem]{Definition}
\newtheorem{corollary}[theorem]{Corollary}
%\newtheorem{proofThm}{Proof}
%newtheorem{exercise}{Exercice}

%	\newcommand\pointnegatif{\raisebox{-1.5mm}{\tikz{\node[red!50!black, fill=red!50!white, circle] {-};}}}
%\newcommand\pointpositif{\raisebox{-1.5mm}{\tikz{\node[green!50!black, fill=green!50!white, circle] {+};}}}

%\newenvironment{definition}[1][]{%
%\begin{definitionThm}[#1]~\\%
%\normalfont%
%}
%{
%\end{definitionThm}%
%}

%%%%%% ALGORITHMS %%%%%%%%%%%%%%%%%%%%%%%%%%%%%%%%

\newcommand{\algofunction}{\textbf{fonction }}

\newcommand{\algoprocedure}{\textbf{procédure }}

\newcommand{\algofor}{\textbf{pour }}
\newcommand{\algodo}{\textbf{faire }}

\definecolor{algocommentbackgroundcolor}{rgb}{1,1,0.5}

\newcommand{\algowhile}{\textbf{while }}

\newcommand{\algoif}{\textbf{if }}
\newcommand{\algothen}{\textbf{then }}

\newcommand{\algomatch}{\textbf{match }}

\newcommand{\algocase}{\textbf{case }}

\newlength{\algoindentlongueur}
\setlength{\algoindentlongueur}{5mm}
\newcommand{\algoindent}{\hspace*{\algoindentlongueur}}

\newlength{\algoindentavantvrulelongueur}
\setlength{\algoindentavantvrulelongueur}{0.2cm}
\newcommand{\algoindentavantvrule}{\hspace*{\algoindentavantvrulelongueur}}

\newlength{\dummy}

\newsavebox{\frameminipageboiteavecunnomsuperlongdesortequonnepuissepaslereutiliser}
\newenvironment{frameminipage}[2][c]{%
\begin{lrbox}{\frameminipageboiteavecunnomsuperlongdesortequonnepuissepaslereutiliser}%
\begin{minipage}[#1]{#2}%
} {%
\end{minipage}%
\end{lrbox}%
\framebox{\usebox{\frameminipageboiteavecunnomsuperlongdesortequonnepuissepaslereutiliser}}%
}

\newenvironment{algobloc}{\setlength{\dummy}{\linewidth}\addtolength{\dummy}{- \algoindentlongueur}\addtolength{\dummy}{- \algoindentavantvrulelongueur}\algoindentavantvrule\vrule\algoindent\begin{minipage}{\dummy}}{\end{minipage}}

%%%%%%%%%%%%% INDEX %%%%%%%%%%%%%%%%%%%%%

\tikzstyle{zoneprogramcounter} = [fill=gray!40, draw=none]
\tikzstyle{zonedatacounter} = [fill=gray!40, draw=none, decorate, decoration={snake, amplitude=1pt, segment length=4pt}]
\tikzstyle{portion} = [densely dotted, shade, top color = white, bottom color = gray!40]
\tikzstyle{portiondata} = [densely dotted, shade, top color = white, bottom color = gray!40, decorate, decoration={snake, amplitude=1pt, segment length=4pt}]
\tikzset{
	double arrow/.style args={#1 colored by #2 and #3}{
		-stealth,line width=#1,#2, % first arrow
		postaction={draw,-triangle 90 cap,#3,line width=(#1)-4pt,
			shorten <=4pt,shorten >=4}, % second arrow
	}
}
\tikzstyle{tapearrow} = [double arrow=10pt colored by black!50!white and black!20!white, -triangle 90 cap, fill = white]%
\tikzstyle{execarrow} = [line width=1pt,->]%

\tikzstyle{world}=[inner sep=0.5mm]
\tikzstyle{event}=[fill=gray!30, inner sep=0.5mm]
\tikzstyle{realworldarrowfromleft} = [initial left, initial text={}]
\tikzstyle{realworldarrowfromright} = [initial right, initial text={}]

	%\newcommand{\programdia}[1]{ \left\langle #1 \right\rangle}

%%%%%%%%%%%%%%%%%%%%%%%%%%%%%%%%%%%%%%%
%%%%%%%%%% COMPLEXITY CLASSES %%%%%%%%%
%%%%%%%%%%%%%%%%%%%%%%%%%%%%%%%%%%%%%%%
%% needs \usepackage{relsize} and \usepackage{xspace}

\protected\def\ZPP{\ifmmode \mbox{\sc ZPP} \else {\sc ZPP}\xspace\fi}
\protected\def\RP{\ifmmode \mbox{\sc RP} \else {\sc RP}\xspace\fi}
\protected\def\coRP{\ifmmode \mbox{\sc coRP} \else {\sc coRP}\xspace\fi}
\protected\def\DTIME{\ifmmode \mbox{\sc Dtime} \else {\sc Dtime}\xspace\fi}
\protected\def\NTIME{\ifmmode \mbox{\sc Ntime} \else {\sc Ntime}\xspace\fi}
\protected\def\DSPACE{\ifmmode \mbox{\sc Dspace} \else {\sc Dspace}\xspace\fi}
\protected\def\NSPACE{\ifmmode \mbox{\sc Nspace} \else {\sc Nspace}\xspace\fi}
\protected\def\NP{\ifmmode \mbox{\sc NP} \else {\sc NP}\xspace\fi}
\protected\def\coNP{\ifmmode \mbox{\sc coNP} \else {\sc coNP}\xspace\fi}
\protected\def\NPSPACE{\ifmmode \mbox{\sc NPspace} \else {\sc NPspace}\xspace\fi}
\protected\def\PSPACE{\ifmmode \mbox{\sc Pspace} \else {\sc Pspace}\xspace\fi}
\protected\def\EXPSPACE{\ifmmode \mbox{\sc Expspace} \else {\sc Expspace}\xspace\fi}
\protected\def\TWOEXPSPACE{\ifmmode \mbox{\sc 2Expspace} \else {\sc 2Expspace}\xspace\fi}
\protected\def\PTIME{\ifmmode \mbox{\sc P} \else {\sc P}\xspace\fi}
\protected\def\NPTIME{\ifmmode \mbox{\sc NP} \else {\sc NP}\xspace\fi}
\protected\def\EXPTIME{\ifmmode \mbox{\sc Exptime} \else {\sc Exptime}\xspace\fi}
\protected\def\AEXPTIME{\ifmmode \mbox{\sc Aexptime} \else {\sc Aexptime}\xspace\fi}
\protected\def\NEXPTIME{\ifmmode \mbox{\sc NExptime} \else {\sc NExptime}\xspace\fi}
\protected\def\2EXPTIME{\ifmmode \mbox{\sc 2-Exptime} \else {\sc
		2-Exptime}\xspace\fi}
\DeclareRobustCommand{\kEXPTIME}[1][k]{\ifmmode \mbox{\sc $#1$-Exptime}
	\else {\sc $#1$-Exptime}\xspace\fi}
\DeclareRobustCommand{\kNEXPTIME}[1][k]{\ifmmode \mbox{\sc $#1$-NExptime}
	\else {\sc $#1$-NExptime}\xspace\fi}
\DeclareRobustCommand{\kEXPSPACE}[1][k]{\ifmmode \mbox{\sc $#1$-Expspace}
	\else {\sc $#1$-Expspace}\xspace\fi}
\protected\def\ELEMENTARY{\ifmmode \mbox{\sc Elementary} \else {\sc Elementary}\xspace\fi}
\protected\def\AEXPpol{\ifmmode \mbox{{\sc A}_{\text{pol}}\EXPTIME} \else
	{\sc A}$_{\text{pol}}$\EXPTIME\fi}
\protected\def\APTIME{\ifmmode \mbox{\sc Aptime} \else {\sc Aptime}\xspace\fi}
\protected\def\AEXPSPACE{\ifmmode \mbox{\sc Aexpspace} \else {\sc Aexpspace}\xspace\fi}

%%%%%%%%%%%%%%%%%%%%%%%%%%%%%%%%%%%%%%%%%%%%%%%%%%%%%
%%% ASYNC
%%%%%%%%%%%%%%%%%%%%%%%%%%%%%%%%%%%%%%%%%%%%%%%%%%%%%%%

%\newcommand{\consistent}{\checkmark}%{\begin{tikzpicture}\node[draw, inner sep=0.7mm] {$\checkmark$};\end{tikzpicture}}
%\newcommand{\consistent}{\textrm{\PHeagle}}
%\newcommand{\consistent}{{\textrm{\PHram}}}
%\newcommand{\consistent}{\reflectbox{\textrm{\PHcat}}}
%\newcommand{\consistent}{\reflectbox{\textrm{\PHtunny}}}

\newcommand{\gloups}[1]{\bigcirc}

%%%% EPISTEMIC PLANNING

%% notations for particular knowledge states
 %finite interval of integers
 %all integers
 %all integers

%\newcommand{\propr}{\e}

% trucs à François

%%%%%%%%%%%%%%%%%%%%%%%%%%%%%%%%%%%%%%%%%%%%%%%
% CELLULAR AUTOMATA

\tikzstyle{cell} = [draw,minimum height=5mm,minimum width=5mm]
\tikzset{
	cellcolor/.cd,
	0/.style={fill=gray!20!white},
	1/.style={fill=yellow!20!white}
}
\tikzstyle{cellalive} = [fill=yellow!20!white]

 %\sigma pas bon car on a l'impression que c'est un élément de \Sigma

%%%%%%%%%%%%%%%%%%%% remarkable actions %%%%%%%%%%%%%%%%%%%%%%%%%%%

%%%%%%%%%%%%%% remarkable atomic propositions %%%%%%%%%%%%%%%%

%%%%% finite words %%%%%%%%

%%%%%%%%%% finite-word automata %%%%%%%%%%%%

%%% truc tikz de François

\newcommand{\labellednode}[3]{
$#1$
$\begin{pmatrix}
#2 \\
#3
\end{pmatrix}$
}

\newcommand{\labellednodeinv}[3]{
$\begin{pmatrix}
#2 \\
#3
\end{pmatrix}$
$#1$
}

%%%%%%%%%%%%%%%%%%%%%%%%%%%%%%%%%%%%%%%%%%%%%%%%%%%%%%%%%%%%%%%%%%
%%% TABLEAU RULES
%%%%%%%%%%%%%%%%%%%%%%%%%%%%%%%%%%%%%%%%%%%%%%%%%%%%%%%%%%%%%%%%%%

\newcommand\listeventsempty\epsilon

%%%%%%%%%%%%%%%%%%%%%%%%%%%%%%%%%%%%%
%%%%%%%%%% SAT DEL CK EN VRAC %%%%%%%
%%%%%%%%%%%%%%%%%%%%%%%%%%%%%%%%%%%%%

\newcommand{\initialword}{\omega}

%%%%%%%%%%%%%%%%%%%%%%%%%%%%%%%%%%%%%%%%%%%%%%%%%
%%%%% EPISTEMIC PLANNING PROPOSITIONAL
%%%%%%%%%%%%%%%%%%%%%%%%%%%%%%%%%%%%%%%%%%%%%%%
%{{\mc{E}}}

%% automatic structures

%% regular automatic trees

%\newcommand{\Branch}{\Pi}

%{\textsf{code}}
% logics

%\newcommand{\isBranchto}[2]{\textsf{pathto}[#1,#2]}
%\newcommand{\fromtoBranch}[3]{\textsf{pathfromto}[#1,#2,#3]}

%%%%%%%%%%% TILINGS %%%%%%%%%%%%%%%%

%\tile x y colorleft colorup colorright colorbottom

\definecolor{C}{rgb}{1,1,1}
\definecolor{C }{rgb}{0.8,0.8,0.8}

\definecolor{Ca}{rgb}{0.95,1,0.5}
\definecolor{Cb}{rgb}{1,0.9,0.5}
\definecolor{Cw_2}{rgb}{0.95,1,0.5}
\definecolor{Cw_n}{rgb}{1,0.9,0.5}

\definecolor{Cq'-}{rgb}{0.5,0.5,1}
\definecolor{C-q_0}{rgb}{0.5,0.5,1}
\definecolor{C-q_1}{rgb}{0.5,0.4,1}
\definecolor{C-q'}{rgb}{0.5,0.8,0.5}
\definecolor{Cq_f-}{rgb}{0.4,0.9,0.5}

\definecolor{Cq_0,w_1}{rgb}{1,0.6,0.5}
\definecolor{Cq_0,}{rgb}{1,0.6,0.4}
\definecolor{Cq_2,}{rgb}{0.8,0.9,0.4}
\definecolor{Cq_2,a}{rgb}{0.8,0.9,0.4}
\definecolor{Cq_f,a}{rgb}{0.7,0.9,0.4}
\definecolor{Cq_0,a}{rgb}{1,0.6,0.5}
\definecolor{Cq_0,b}{rgb}{1,0.5,0.5}
\definecolor{Cq_1,a}{rgb}{1,0.5,0.4}
\definecolor{Cq_1,b}{rgb}{1,0.4,0.4}
\definecolor{Cq,a}{rgb}{1,0.5,0.5}
\definecolor{Cq',a}{rgb}{1,0.5,0.5}
\definecolor{Cq',b}{rgb}{1,0.5,0.5}
\definecolor{Cq_0,1}{rgb}{0.5,0.5,0.5}

%\newcommand\nettoyerforcolor[1]{\StrSubstitute{#1}{_}{_}[#1]}

%x y q a b q2

%x y q a b q2

%x y q a b q2

%x y q a

%x y q a

\provideboolean{student}

\newcommand{\trou}[1]{\ifthenelse{\boolean{student}}
	{\colorbox{gray!10!white}{\phantom{#1}}}
	{#1}
}

\ifthenelse{\boolean{student}}
{\excludecomment{hidden}}
{}

\usepackage{mdframed}

% all 4 borders
\newmdenv{allfour}

% just top and bottom
\newmdenv[rightline=false,topline=false,bottomline=false]{solutionbox}

\ifthenelse{\boolean{student}}
{\excludecomment{solution}}
{
}

\ifthenelse{\boolean{student}}
{\excludecomment{examplehidden}}
{}

\ifthenelse{\boolean{student}}
{\excludecomment{exercicehidden}}
{}

\ifthenelse{\boolean{student}}
{\excludecomment{remarkhidden}}
{}

\definecolor{proba}{RGB}{0, 0, 128}

\usepackage{tikzsymbols}

\newcommand{\taille}[1]{|#1|}
\newcommand\sizeof\taille

\newcommand\instancemot\initialword

%quantique

\newcommand{\aGNN}{N}
\newcommand{\gmlbox}[1]{\exists^{\geq #1}}

\newcommand{\Ktwoa}{K_{2a}}
\newcommand{\fragmentGML}{\text{GML}_3}
\newcommand{\fragmentGMLmodallogic}{\text{GML}_1}

\newcommand{\comb}{\mathit{comb}}
\newcommand{\agg}{\mathit{agg}}
\newcommand{\graphs}{\mathcal{G}}
\newcommand{\pgraphs}{\mathcal{G}_\bullet}
\newcommand{\bs}[1]{\boldsymbol{#1}}
\newcommand{\multset}[1]{\{\!\{#1\}\!\}}

\newcommand{\thelogic}
%{\textswab{L}} does not compile on François' computer + it is difficult to read
{\ensuremath{\mathcal L_\textsf{quantGNN}}}

\newcommand{\featuresset}{F}

\newcommand{\multiset}[1]{ \{\{ #1 \}\} }
  % want \bigwedge?

\newcommand{\aggvar}{a}

\newcommand{\expression}{\xi} %{\varepsilon}
\newcommand{\nodes}{V}
\newcommand{\edges}{E}
\newcommand{\activationfunction}{\alpha}
\newcommand{\agreggationfunction}{agg}

\newcommand{\layer}{\mathcal{L}}

\newtheorem{lemma}[theorem]{Lemma}
\newtheorem{remark}[theorem]{Remark}
\newcommand{\degree}{\delta}

\newcommand\citesuplementarymaterial{[Appendix]}

\begin{document}

\maketitle

\begin{abstract}
 
In this paper, we investigate the verification of quantized Graph Neural Networks (GNNs), where some fixed-width arithmetic is used to represent numbers.
We introduce the linear-constrained validity (LVP) problem for verifying GNNs properties, and provide an efficient translation from LVP instances into a logical language. We show that LVP is in PSPACE, for any reasonable activation functions. We provide a proof system. We also prove PSPACE-hardness, indicating that while reasoning about quantized GNNs is feasible, it remains generally computationally challenging.
\end{abstract}

\section{Introduction}
\label{sec:intro}

Graph Neural Networks (GNNs) are %a type of 
neural network models computing functions over graphs or graph-node pairs. Their ability to extract feature and structural information from
%a given graph 
graphs
has made GNNs a promising candidate for % efficiently
 tackling tasks in fields such as social network analysis, chemistry applications, and knowledge graphs. For a comprehensive survey of 
%general 
GNN applications, see \citet{ZhouCHZYLWLS20}, and for a survey of GNNs for knowledge graphs, see \citet{YeKSSW22}.

The increasing use of GNNs (or neural network-based models in general) leads to a heightened necessity for reliable safety guarantees or explanations of their behavior, in order to meet legal requirements or client desiderata. Let us illustrate some guarantees we may expect.

\begin{example}[label=ex:gnn_social_network]
%\label{ex:gnn_social_network}
    Consider a GNN $N$ used to identify bots in a fictitious social network \texttt{NEWSNET} (see Figure~\ref{fig:gnn_social_network}). The network forms a graph where any registered account is a node and the edges represent relationships between accounts. In such a setting, the GNN decides, based on account-specific features, such as the frequency of posts or time of activity, whether an account is a regular human user or a (malicious) bot.

 For the trustworthiness of \texttt{NEWSNET}, it is essential that the GNN $N$  comes with guarantees such as:

    \begin{enumerate}
    \itemsep0em
        \item[A\phantom{'}] ``Every account that spams $100$ messages or more per minute is identified as a bot by $N$.''
        \item[A'] ``If there is a significant, humanly impossible activity of a user within a short amount of time, then $N$ will identify it as a bot.''
        \item[B\phantom{'}] ``Every account whose friends send more than $1000$ messages per minute in total is flagged as part of a bots' network by $N$.''
        \item[B'] ``If an account is a friend with active bots, then $N$ will flag it as a bot by association.''
    \end{enumerate}
    The properties $A$ and $B$ can be seen as safety properties, indicating that $N$ is somewhat robust in its decision making. The properties $A'$ and $B'$ provide reasonable explanations for the decision-making process of $N$.
\end{example}

 \begin{figure}
     \centering
     \tikzset{every picture/.style={line width=0.75pt}} %set default line width to 0.75pt        

\newcommand{\pointedperson}[2]{
  \node[inner sep=0.1mm] (#1#2) at (#1, #2) {\faUser};
}

\newcommand{\person}[2]{
  \node[gray, inner sep=0.1mm] (#1#2) at (#1, #2) {\faUser};
}
\newcommand{\linkpersons}[4]{
 \draw[gray] (#1#2)--(#3#4);
}

\newcommand{\examplesocialetwork}{
    \begin{tikzpicture}[xscale=0.7, yscale=0.4]
        \pointedperson 34
        \person 14
        \person 15
        \person 24
        \person 25
        \linkpersons 3424
        \linkpersons 1524
        \linkpersons 1424
        \linkpersons 3425
    \end{tikzpicture}
}

\begin{tikzpicture}
    \node (input) at (-2.5, 0)  {\examplesocialetwork};
    \node[draw, inner sep=2mm, rounded corners=2mm] (gnn) {GNN $N$};
    \node at (2.5, 0) (output) {bot or user};
    \draw (gnn) edge[-latex] (output);
    \draw (input) edge[-latex] (gnn);
\end{tikzpicture}
     %François: Sorry, updated the figure so that it takes less space + clearer I think
     \caption{Setting where a GNN $N$ is used to identify whether an account \faUser\ is a human or a bot in a fictitious social network.}
     \label{fig:gnn_social_network}
 \end{figure}

Unfortunately, like most neural network-based models, GNNs exhibit a black-box nature, making them particularly challenging to analyze. The ultimate goal in this context is \emph{formal reasoning}, which involves using sound and complete procedures to interpret the behavior of GNNs, and to certify or falsify specific safety properties of GNNs. Unfortunately, recent works such as \cite{SalzerL23,benedikt2024decidability} suggest that formal reasoning is practically intractable for highly expressive models like GNNs. Consequently, most of the research so far has focused on less rigorous types of reasoning, such as sound but incomplete or probabilistic procedures. For a comprehensive overview of non-formal verification procedures, see \citet{Gunnemann2022}.

% % first we critize the current work
% Several works tackle verification of GNNs with logic
% %Several works build bridges between GNNs and logic
% (\cite{DBLP:conf/iclr/BarceloKM0RS20, benedikt2024decidability, ijcai2024}).
% However, they suffer essentially from two drawbacks. First, they consider idealized GNNs where weights and features are real numbers or arbitrary large integers. However, in practice, GNNs are \emph{quantized}, which means that numerical parameters and internal computations are constrained by a fixed representation size. For instance, the library \textsf{PyG} (\cite{Fey/Lenssen/2019}), based on PyTorch, designed to handle GNNs and other models, uses floating-point arithmetics of $32$ bits or $64$ bits. Furthermore, quantized GNNs are becoming increasingly important in tackling ever-growing model sizes by using very low numbers of bits to represent numerical model parameters (\cite{GholamiQuantized,ZhuLMH0LL023}). Second, previous works only consider truncated ReLU and other eventually constant activation functions (\cite{DBLP:conf/iclr/BarceloKM0RS20, ijcai2024}), or have limited results on GNNs utilizing the common ReLU activation (\cite{benedikt2024decidability}).

\changed{
Several works have explored the expressivity and the verification of GNNs through logical frameworks (\cite{DBLP:conf/iclr/BarceloKM0RS20,benedikt2024decidability,ijcai2024,AhvonenHKL24,DBLP:conf/kr/CucalaG24}).
However, much of this work focuses on
idealized GNNs with real-valued or unbounded integer parameters, whereas practical GNNs are typically \emph{quantized}---their numerical parameters and internal computations are constrained by fixed-size representations. For example, the \textsf{PyG} library
(\cite{Fey/Lenssen/2019}) uses $32$- or $64$-bit floating-point arithmetic, and recent research increasingly targets quantized models with reduced precision (\cite{GholamiQuantized,ZhuLMH0LL023}). In addition, most existing results focus on eventually constant activation functions such as truncated ReLU (\cite{DBLP:conf/iclr/BarceloKM0RS20,ijcai2024}), or provide only partial results for standard ReLU
(\cite{benedikt2024decidability}). Orthogonally, \cite{AhvonenHKL24} investigates the expressivity of \emph{recurrent} GNNs over floats, while \cite{DBLP:conf/kr/CucalaG24} addresses a weaker class of GNNs but considers a broader range of activation functions.
}

%then we present our contribution
The contributions of the paper are as follows.
\begin{itemize}
\item We introduce a logic \thelogic{} capturing a meaningful class of quantized (aggregate-combine) GNNs as well as linear input and output constraints on these GNNs (Section~\ref{sec:logic}).
\item We introduce the linear-constrained validity problem (LVP) that enables to model properties like the ones in Example~\ref{fig:gnn_social_network}. We show how to solve LVP by reducing it to the satisfiability problem of \thelogic{} (\Cref{sec:gnntologic}).
\item We provide a proof system that enables to create graph counterexamples. A prototype of the proof system is implemented in Python. Noticeably, the proof system gives a PSPACE upper bound for a large class of GNNs, with any reasonable activation function (\Cref{sec:tableau}).
\item We also study the PSPACE lower bound and establish that LVP as well as satisfiability of \thelogic{} 
are indeed PSPACE-complete (\Cref{sec:lowerbound}).
\end{itemize}

\section{Fundamentals}
\label{sec:fundamentals}

\newcommand{\setnumbers}{\mathbb K}

\paragraph{Fixed-width Arithmetic} We consider different sets of numbers, denoted by $\setnumbers$, represented in binary using a fixed number of bits. Depending on the underlying (fixed-width) arithmetic, like fixed- or floating-point arithmetic, we denote the respective variants of operations like $+, \cdot, \div$ or relations like $\leq, \geq, =$ by $+_\setnumbers, \leq_\setnumbers, \dots$ and so on. For mathematically rigorous definitions of such arithmetics, see, for example,
\citet{Ercegovac2004,Goldberg91}.
% \citet{BaranowskiHLNR20,ConstantinidesD20}. 
To keep the notation uncluttered, we use $\setnumbers$ to refer to the
set of values as well as the underlying fixed-width arithmetic.
In general, we assume $\setnumbers$ contains $0$, $1$ and $-1$.
% Here are some examples\todo{or one example} for illustrative purposes.

% \begin{example}
% \label{example:signedint}
% In signed 32-bit integers, $\setnumbers$ is the set of numbers $\set{-2147483648, -2147483647, \dots, 2147483647}$.
% \end{example}

\begin{example}
\label{example:fixedpointarithmetic}
$16$-bit fixed-point arithmetic may be used in micro-controllers. For instance, $\setnumbers$ may be the set of numbers, of the form 
    $(-1)^s\frac{k}{10^{4}}$ with four decimal places precision,
    where $s \in \set{0, 1}$ is the sign, $k \in \{0, \dots, 2^{15}-1\}$.
\end{example}

\begin{example}
\label{example:floatingpointarithmetic}
In IEEE754 $32$-bit floating-point arithmetic, $\setnumbers$ exactly contains $0$, $-\infty$, $+\infty$ and numbers that are of the form 
    $(-1)^s 2^e(1+\frac{k}{2^{23}})$
    where $s \in \set{0, 1}$ is the sign, $e \in \set{-128, -125, \dots, 127}$, $k \in \set{0, \dots, 2^{23}-1}$. % The number $(1+\frac{k}{2^{23}})$ is called the \emph{significand}.
\end{example}

\paragraph{Graphs} We consider directed, vector-labelled graphs $G = (\nodes, \edges, \ell)$ where $\nodes$ is a finite set of nodes, or vertices,
$\edges$ is a set of directed edges and $\ell: V \rightarrow \mathbb{R}^m$.
We denote the class of all such graphs by $\graphs$.
We call a pair $(G,v)$ where $v$ is a node of $G$ a \emph{pointed graph}. We denote the class of all pointed graphs by $\pgraphs$.
If we restrict the classes $\graphs$ and $\pgraphs$ to vector labels with values from some fixed-width arithmetic $\setnumbers$, we denote this by $\graphs[\setnumbers]$ and $\pgraphs[\setnumbers]$.

\begin{figure}
    \centering
    	\begin{tikzpicture}[xscale=2,yscale=0.7]
		\draw[fill=white, opacity=0.7,rounded corners = 5mm, draw=none] (-2.5, 1.5) rectangle (0.6, -1.5);
		\node[inner sep=0mm, rounded corners=2mm] (0) at (-2, 0) {\labellednode{v}{1}{1}};
		\node[inner sep=0mm] (1) at (0, 1) {\labellednodeinv{v_1}{0.2}{1}};
		\node[inner sep=0mm] (2) at (0, -1) {\labellednodeinv{v_2}{1}{0.2}};
		\draw[->] (0) edge [bend left=10] (1);
		\draw[->] (0) edge [bend left=10] (2);
		\draw[->] (1) edge [bend left=10] (2);
		\draw[->] (2) edge [bend left=20] (0);
	\end{tikzpicture}
    \caption{Example of a vector-labelled graph.}
    \label{fig:graph}
\end{figure}
\begin{example}
    Figure~\ref{fig:graph} shows a vector-labelled graph with three nodes. Here $\ell : V \rightarrow \ensR^2$ and $\ell(v) = \begin{pmatrix}
        1\\1
    \end{pmatrix}$ for instance.
\end{example}

\paragraph{Graph Neural Networks} 
A \emph{Graph Neural Network (GNN)} $N$ is a tuple $(\layer_1, \dotsc, \layer_m, \layer_{\mathit{out}})$. Each layer $\layer_i$ computes 
$\layer_i(\boldsymbol{x}, M) = \comb_i(\boldsymbol{x}, \agg_i(M))$ where $\boldsymbol{x}$ is a real-valued vector, $M$ is a 
multiset of vectors, $\agg$ is called an \emph{aggregation function}, mapping a multiset of vectors onto a single vector, 
and $\comb$ is called a \emph{combination function}. 
Unless explicitly stated otherwise, $\agg$ denotes the sum.
Layer $\layer_{\mathit{out}}$ computes a function from vectors to vectors.
The GNN $N$ computes a function $\pgraphs \rightarrow \mathbb{R}^n$, where $n$ is the \emph{output dimension
of $N$}, in the following way. Let $(G, v)$ where $G=(V,E,\ell)$ be a pointed graph. 
First, $N$ computes a state $\boldsymbol{x}^m_u$
for all $u \in V$ in a layer-wise fashion by $\bs{x}^i_u = \comb_i(\boldsymbol{x}^{i-1}_u, \agg_i(\multset{\bs{x}^{i-1}_{u'} \mid (u,u') \in \edges}))$ 
and $\bs{x}^0_u = \ell(u)$. The overall output $N(G,v)$ is given by $\layer_{\mathit{out}}(\bs{x}^m_{v})$.
If we restrict the computation of $N$ to some fixed-width arithmetic, it computes a function 
$\pgraphs[\setnumbers] \rightarrow \setnumbers^n$ and we assume that all numerical parameters of $N$ as well as internal
computations are carried out in $\setnumbers$. We call $N$ \emph{quantized (to $\setnumbers$)} in this case.

\begin{example}\label{ex:basic}
    Consider a GNN $N = (\layer_1, \layer_2, \layer_{out})$ with $\layer_i(\boldsymbol{x}, M) = \comb(\boldsymbol{x}, \agg(M))$
    where $\agg$ is the sum and with the same combination function $\comb: \mathbb{R}^{2\cdot2} \rightarrow \mathbb{R}^2$ in both layers where
    $\comb((x, x'),(\aggvar, \aggvar'))$ is given
    by
    \begin{displaymath}
    \left(\begin{matrix}
		\activationfunction(x + 2x' - 3\aggvar + 4\aggvar' + 5) \\
		\activationfunction(6x + 7x' + 8\aggvar - 9\aggvar' + 10) \\
	\end{matrix}\right)
 \end{displaymath}
 for some activation function $\activationfunction$, such as ReLU. The output layer $\layer_{out}$ 
 computes $\mathbb{R}^2 \rightarrow \mathbb{R}$ and is
 given by $\activationfunction(x_1 + x_2 - 2)$.
\end{example}

\section{Boolean Combinations Over Modal Expressions}
\label{sec:logic}

To reason about broad classes of GNNs and overcome the limitations of existing logical characterizations, we propose the logic $\thelogic{}$, which enables reasoning about general but quantized GNNs with arbitrary fixed-width arithmetic and activation functions. Existing characterizations, such as graded modal logic (\cite{DBLP:conf/iclr/BarceloKM0RS20}), modal logic on linear inequalities over counting (\cite{ijcai2024}), or fragments of Presburger logic (\cite{benedikt2024decidability}), typically relate to GNNs with eventually constant activation functions, like truncated ReLU $x \mapsto \max(0, \min(1,x))$. In contrast, $\thelogic{}$ extends beyond these constraints using the sole assumption of a quantized setting.

\subsection{Syntax}
Let $\featuresset$ be a finite set of features and $\setnumbers$ be some finite-width arithmetic. We consider a set of \emph{expressions} defined by the following grammar:
$$\expression ::= c \mid x_i \mid \activationfunction (\expression) \mid \agreggationfunction (\expression) \mid %\agreggationfunctionglobalreadout(\expression) \mid
\expression + \expression \mid c \times \expression$$
where $c \in \setnumbers$, $x_i \in \featuresset$.
A \emph{formula} of the logic $\thelogic{}$ is then a Boolean formula where the atomic formulas are of the form $\expression \geq k$ for some $k \in \setnumbers$. 
When there is no ambiguity and to lighten the notation, we sometimes use more standard expressions, like $-x$ instead of $-1 \times x$, 
or $c\expression$ instead of $c \times \expression$, etc.
% or $x \geq 4$ instead of $x + (-1) \times 4 \geq 0$
We use $\expression = k$  as an abbreviation for
$(\expression \geq k) \land (-1 \times \expression + k \geq 0)$.
Formulas are represented as DAGs.
For example, in 
%$\expression \geq k \land -1\expression \geq -k$ 
$(\expression \geq k) \land (-1 \times \expression + k \geq 0)$
the expressions $\expression$ and $k$ are \emph{not} duplicated as shown in Figure~\ref{figure:dag}.
The symbol $\top$ denotes the tautology 
(any universally true formula, e.g. $x_1 - x_1 \geq 0$).
\begin{figure}
    \centering
    \begin{tikzpicture}[xscale=1.3,yscale=0.5, 
    every node/.style={inner sep=0.5mm}
    ]
        \node (E) at (1, 0) {$\expression$};
        \node (-1) at (1, -2) {$-1$};
        \node (k) at (1, -1) {$k$};
        \node (0) at (1, -3) {$0$};        
        \node (plus) at (-1.5, -1.25) {$+$};
        \node (times) at (-0.25, -2) {$\times$};
        \node (geq1) at (-2, 0) {$\geq$}; %geq at the top
        \node (geq2) at (-2, -2.5) {$\geq$}; %geq at the bottom
        \node (and) at (-3, -1) {$\land$};

        \draw[-latex] (geq1) edge[bend left=10] node [above] {$\cdot$} (E);
        \draw[-latex] (geq1) edge[bend right=10] (k);

        \draw[-latex] (geq2) edge[bend left=10] node [left] {$\cdot$} (plus);
        \draw[-latex] (geq2) edge[bend right=10] (0);

        \draw[-latex] (plus) edge[bend left=5] (k);
        \draw[-latex] (plus) edge[bend right=5] (times);
        \draw[-latex] (times) edge[bend right=5] (E);
        \draw[-latex] (times) edge[bend right=20] (-1);
          
         %\draw[-latex] (-2) -- (E);
           % \draw[-latex] (-) -- (k);
           %\draw[-latex] (-2) -- (E);
        \draw[-latex] (and) -- (geq1);
        \draw[-latex] (and) -- (geq2);
    \end{tikzpicture}
  
    \caption{DAG representation for 
    %the formula 
    $(\expression \geq k) \land (-1 {\times} \expression + k \geq 0)$.}
    \label{figure:dag}
\end{figure}

\subsection{Semantics}

\newcommand{\semone}[1]{[\![#1]\!]}
\newcommand{\sem}[2]{[\![#1]\!]_{#2}}

Recall that $\setnumbers$ is a finite set of numbers, equipped with an addition $+_\setnumbers$, multiplication $\times_\setnumbers$ operation, and comparison operations $\leq_\setnumbers$, $<_\setnumbers$.
We fix $\semone{\activationfunction}$ to be an activation function. For instance, $\semone{\activationfunction} = ReLU$, with $ReLU(x) = x$ if $x \geq_\setnumbers 0$ and $0$ otherwise. 
Formulas are evaluated over the class of pointed, labelled graphs $G = (\nodes, \edges, \ell)$ where $\ell : \nodes \rightarrow \setnumbers^I$, defined as follows.
The semantics $\sem{\expression} u$ of an expression $\expression$ with respect to a vertex $u\in V$ is inductively defined on $\expression$: 
\begin{center}
\begin{minipage}{40mm}
\begin{align*}
\sem {c} u & = c \\
    \sem {x_i} u & = \ell(u)_i \\
      \sem{\expression + \expression'} u & = \sem \expression u +_{\setnumbers} \sem {\expression'} u 
  \end{align*}
\end{minipage}
\hfill
\begin{minipage}{35mm}
\begin{align*}
  \sem{c \times \expression} u & = c \times_{\setnumbers} \sem{\expression} u \\
      \sem{\activationfunction (\expression)} u & = \semone{\activationfunction}(\sem \expression u)  \\
    \sem{\agreggationfunction (\expression)} u & = \Sigma_{v \mid u \edges v}\sem {\expression} v
    %\\
 %   \sem{\agreggationfunctionglobalreadout (\expression)} u & = \Sigma_{v \in \nodes}\sem {\expression} v
\end{align*}
\end{minipage}
\end{center}

The semantics $\sem{x_i}{u}$ is the value of $i$-th feature in vertex~$u$. 
Note that $\agreggationfunction(\expression)$
is motivated by the message-aggregation process in a GNN and is the `modal' part of $\thelogic{}$. Its semantics $\sem{\agreggationfunction(\expression)}{u}$ consists in summing up the values/semantics of $\expression$ in the successors $v$ of $u$.
We write $G, u \models \phi$ to say that the formula $\phi$ is true in the pointed graph $(G,u)$, and we say that $(G,u)$ is a \emph{model} of $\phi$. Given $\degree \in \mathbb{N} \union \set{+\infty}$, a formula $\phi$ is \emph{$\delta$-satisfiable} 
(simply satisfiable if $\delta = +\infty$)
if it has a model in which each vertex is of out-degree (number of successors) at most $\delta$. %there is a pointed graph $(G, u)$ such that $G, u \models \phi$.

\begin{example}
The formula $x_1 + \activationfunction(x_2) \geq 0$ is true in the pointed graphs $(G, u)$ where the sum of the first feature at $u$ plus the result of the application of the activation function on the second feature at $u$ is positive (i.e., $(\ell(u))_1 + \semone{\activationfunction}(\ell(u))_2 \geq_{\setnumbers} 0$).
The formula $\agreggationfunction(1) = 4$ is true in the pointed graphs $(G,u)$ with $u$ having exactly $4$ successors. The formula $\agreggationfunction(3) = 10$ is unsatisfiable.
\end{example}

\section{Enabling Formal Reasoning About GNNs Using $\thelogic{}$}
\label{sec:gnntologic}

Let us now formally define \emph{linear-constrained validity problem (LVP)} for GNNs.

\begin{definition}
\label{definition:lvp}
The \emph{linear-constrained validity problem (LVP)} % for GNNs 
is defined as follows:
    \begin{itemize}
        \item given a quantized GNN $N$ with input dimensionality $m$ and output dimensionality $n$, two systems of linear inequalities $L_{\text{in}}$ over variables $x_1, \dotsc, x_m$ and 
$L_{\text{out}}$ over variables $y_1, \dotsc, y_n$, and a bound $\delta \in \mathbb{N} \cup \{+\infty\}$,
\item decide whether 
$(N, L_{\text{in}}, L_{\text{out}}, \delta)$ 
is \emph{valid}, that is all pointed graphs $(G,v)$ 
where the out-degree of each vertex is less or equal than $\delta$, $v$ has $m$ features, and $v$ satisfies system $L_{\text{in}}$ imply that $N(G,v)$ satisfies system~$L_{\text{out}}$. 
    \end{itemize}
\end{definition} 
We write 
$(N, L_{\text{in}}, L_{\text{out}})$ instead of 
$(N, L_{\text{in}}, L_{\text{out}}, +\infty)$.

\begin{example}[continues=ex:gnn_social_network]
Recall the setting of Example~\ref{ex:gnn_social_network} where we considered the safety property, “Every account that spams $100$ messages or more per minute is identified as a bot by $N$.” 

This property is expressed by the LVP instance $(N, L_{\text{in}}, L_{\text{out}})$ with
%$L_{\text{in}} := x_{\textit{msg\_count}} \geq 100$ and $L_{\text{out}} := y \geq 0.6$
$L_{\text{in}} := x_1 \geq 100$ and $L_{\text{out}} := y_1 \geq 0.6$
where 
% $x_{\textit{msg\_count}}$ 
$x_1$ 
is the feature corresponding to the number of messages sent by a user, and 
% $y$
$y_1$ is the output of $N$ corresponding to its classification as a bot 
% ($y \geq 0.6$) or a regular user ($y < 0.6$). 
($y_1 \geq 0.6$) or a regular user ($y_1 < 0.6$). 
Note that in this simple example $L_{\text{in}} $ and $L_{\text{out}}$ are systems consisting of a single inequality.
\end{example}

We show that LVP over quantized 
GNNs, where aggregation is given by summation, and combination functions and output functions are realized by classical feedforward neural networks (FNNs) using some kind of activation $\activationfunction$, is reducible to the satisfiability problem of \thelogic{}.

We denote the size of the GNN $N$ by $|N|$, with $|N| \in \mathcal{O}(|N_{\comb_1}| + \dotsb + |N_{\comb_m}| + |N_{\mathit{out}}|)$ where 
$m$ is the depth of $N$ and $|N_{\comb_i}|$ and 
$|N_{\mathit{out}}|$ are the sizes of the FNN used in $N$. The size of an FNN is simply the sum of the sizes of all its numerical parameters, namely weights and biases.

\begin{theorem}
\label{th:reduction}
    Let $I = (N, L_{\text{in}}, L_{\text{out}}, \degree)$ be an LVP instance. There is a \thelogic{} formula $\varphi_I$ such that $I$ is valid 
    if and only if $\varphi_I$ is not $\degree$-satisfiable. Furthermore, $\varphi_I$ can be computed 
    from $N, L_{\text{in}}, L_{\text{out}}$
    in polynomial time with respect to $|N| + |L_{\text{in}}| + |L_{\text{out}}|$.
\end{theorem}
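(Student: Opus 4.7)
The plan is to set $\varphi_I$ to be the conjunction of a direct encoding of $L_{\text{in}}$ with the negation of an encoding of $L_{\text{out}}$ in which each output variable $y_j$ is replaced by an expression $\expression_{y_j}$ of $\thelogic{}$ whose semantics at the pointed vertex $u$ equals the $j$-th coordinate of $N(G,u)$. Then a pointed graph $(G,u)$ is a model of $\varphi_I$ exactly when $u$ satisfies the input constraints while $N(G,u)$ violates the output constraints, i.e., exactly when it is a counterexample to validity of $I$. Hence $I$ is valid iff $\varphi_I$ is unsatisfiable.

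To build the $\expression_{y_j}$, I proceed by induction over the layers of $N$. For layer~$0$, set $\expression^0_k := x_k$ so that $\sem{\expression^0_k}{u} = \ell(u)_k = (\bs{x}^0_u)_k$. Given expressions $\expression^{i-1}_1,\dots,\expression^{i-1}_d$ computing $\bs{x}^{i-1}_u$ coordinate by coordinate, the aggregation $\agg_i$, being summation, is mirrored coordinate-wise by $\agreggationfunction(\expression^{i-1}_k)$: the semantics of $\agreggationfunction$ sums the argument over $\edges$-successors, matching $\agg_i(\multset{\bs{x}^{i-1}_{u'} \mid (u,u')\in \edges})$. The combination $\comb_i$ is an FNN, i.e., an alternation of affine maps and $\activationfunction$; since the grammar of $\thelogic{}$ provides $c$, $c\cdot$, $+$ and $\activationfunction(\cdot)$, I unroll $\comb_i$ gate by gate, feeding the previously-built expressions for $\bs{x}^{i-1}_u$ and for the aggregated vector into the appropriate affine combinations. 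The output layer $\layer_{\mathit{out}}$ is translated identically, yielding the $\expression_{y_j}$. Each inequality of $L_{\text{out}}$, once $y_j$ is substituted by $\expression_{y_j}$, becomes an atomic formula of $\thelogic{}$, and taking the disjunction of their negations encodes ``$L_{\text{out}}$ is violated''.

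Correctness reduces to a straightforward induction establishing $\sem{\expression^i_k}{u} = (\bs{x}^i_u)_k$ for every $u$ and every layer $i$; since $\sem{\cdot}{\cdot}$ uses the quantized operations of $\setnumbers$, the translated expression computes the GNN's quantized output exactly. For the size bound, the DAG representation is what keeps everything polynomial: the $d$ expressions $\expression^{i-1}_k$ are shared between the self-state and aggregated-neighbor parts of layer $i$, and each aggregated expression $\agreggationfunction(\expression^{i-1}_k)$ is built once and shared across all gates of $\comb_i$ that read it. Each FNN gate contributes only $O(1)$ nodes to the DAG, so layer $i$ adds $O(|N_{\comb_i}|)$ nodes and the total size of $\varphi_I$ is $O(|N| + |L_{\text{in}}| + |L_{\text{out}}|)$.

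The main obstacle I expect is bookkeeping around the quantized arithmetic and the DAG sharing. All scalar constants and operations in the translation must live in $\setnumbers$, so the weights and biases of $\comb_i$ and $\layer_{\mathit{out}}$ must be copied verbatim as constants $c \in \setnumbers$ with no silent promotion to $\mathbb{R}$; this is what guarantees that $\sem{\expression_{y_j}}{u}$ reproduces $N(G,u)_j$ under the same fixed-width semantics. A secondary subtlety is making the sub-expression sharing explicit when unrolling each $\comb_i$, since a tree-like unfolding that duplicates coordinate expressions over $m$ layers would cause exponential blow-up; the DAG construction must be described carefully so that this is ruled out, securing the claimed polynomial bound.
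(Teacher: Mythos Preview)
Your proposal is correct and follows essentially the same approach as the paper: layer-by-layer unfolding of the GNN into $\thelogic{}$ expressions using $\agreggationfunction$ for summation-aggregation and the $c,+,c\cdot,\activationfunction$ operators for the FNN combination, with the polynomial bound secured by DAG sharing. The only cosmetic difference is that the paper introduces fresh feature variables $y_j$ and a conjunct $\varphi_N := \bigwedge_j (\expression_{N,y_j}-y_j=0)$ binding them to the output expressions, whereas you substitute the $\expression_{y_j}$ directly into the negated output constraints; both encodings are equivalent.
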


\begin{proof}
    Let $I = (N, L_{\text{in}}, L_{\text{out}})$ be an LVP instance where the GNN $N$ has $l$ layers $\layer_i$, each having a combination function $\comb_i$
    represented by an FNN $N_{\comb_i}$ as well as aggregation function $\agg_i$ represented by the sum $\sum$ and output function represented by FNN $N_{\mathit{out}}$, $L_{\text{in}}$ consists of linear inequalities
    $\varphi_1, \dotsc, \varphi_{k_1}$ working over variables $x_1, \dotsc, x_{n_1}$ and $L_{\text{out}}$ consists of linear inequalities
    $\psi_1, \dotsc, \psi_{k_2}$ working over variables $y_1, \dotsc, y_{n_2}$.
    Formula $\varphi_I$ is the conjunction $\varphi_1 \land \dotsb \land \varphi_{k_1}\land \varphi_N \land (\neg\psi_1 \lor \dotsb \lor \neg\psi_{k_2})$ where $\varphi_N$ is constructed as follows. 
    Formula $\varphi_N$ is a conjunction
    $\expression_{N,y_1} - y_1 = 0 \land \dotsb \land \expression_{N,y_n} - y_{n_2} = 0$ where $\expression_{N, y_i}$ captures the computation of $N$ corresponding to its $i$-th
    output as follows.

    Consider layer $\layer_1$ and assume that $N_{\comb_1}$ has input dimension $2i_1$ and output dimension $o_1$. We define the sub-expression $\expression_{\layer_1, j}= \expression_{N_{\comb_1},j}(z_1, \dotsc, z_{i_1}, \agg(z_1), \dotsc, \agg(z_{i_1}))$ for $j \leq o_1$ where $\expression_{N_{\comb_1},j}$ is the straightforward unfolding of the function of $N_{\comb_1}$ corresponding to its $j$-th output using $c, c\expression, +$ and $\activationfunction(\expression)$ operators on the inputs 
    $z_1, \dotsc, \agg(z_{i_1})$. Next, consider the
    layers $\layer_h$ for $2 \leq h \leq l$. We assume that $N_{\comb_h}$ has input dimension $2o_{h-1}$ and output dimension $o_h$.
    We define the sub-expression $\expression_{\layer_h, j}$ equal to
    $\expression_{N_{\comb_h},j}(\expression_{\layer_{h-i},1}, \dotsc, \expression_{\layer_{h-1},o_{h-1}}, \agg(\expression_{\layer_{h-1},1}),$ $\dotsc$, $\agg(\expression_{\layer_{h-1},o_{h-1}}))$ for each $j \leq o_h$. Finally, we define $\expression_{N,x_j} = \expression_{N_{\mathit{out}, j}}(\expression_{\layer_m, 1}, \dotsc, \expression_{\layer_m, o_m})$ where $\expression_{N_{\mathit{out}, j}}$ is again the unfolding of the function of $N_{\mathit{out}}$ 
    corresponding to its $j$-th output.

    The correctness follows %immediately 
    from the construction above, as we simply unfold the computation of the GNN~$N$ using operators available
    in \thelogic{}. The polynomial bound is due to the fact that we represent formulas as DAGs and, thus, avoiding any duplication of subformulas.
\end{proof}

\begin{example}
Consider the simple GNN $N$ of Example~\ref{ex:basic}.
The formula equivalent to $N$ in the sense of Theorem~\ref{th:reduction} is $\varphi_N := \activationfunction(\expression_{2}+ \expression_{2}') - y = 0$ with
\begin{align*}
    \expression_{i+1} & = \activationfunction(\expression_{i} + 2\expression_{i}' - 3\agg (\expression_{i}) + 4\agg (\expression_{i}') + 5), \\
    \expression_{i+1}' & = \activationfunction(6\expression_{i} + 7\expression_{i}' +8\agg( \expression_{i}) -9\agg (\expression_{i}') + 10),
\end{align*}
$\expression_0 = x$ and $\expression'_0 = x'$. Assuming that $L_{\text{in}} := x+x'\geq 0$ and
$L_{\text{out}} := y \geq 0.6$, the overall \thelogic{} formula would be
$\varphi_I := x+x'\geq 0 \land \varphi_N \land \lnot (y \geq 0.6)$.
\end{example}

\begin{remark}
The translation in Theorem~\ref{th:reduction} can be straightforwardly extended beyond the capabilities of LVP. For instance, we can generalize to instances of the form $(N, \varphi, \psi)$, where~$N$ is a GNN and $\varphi, \psi$ are \thelogic{} formulas.
\end{remark}

Note that if an LVP instance $I$ is not valid, then  $\varphi_I$ is satisfiable. A model of $\varphi_I$ offers a counterexample for $I$.

\section{Satisfiability of $\thelogic{}$ Formulas}
\label{sec:tableau}

In this section, we tackle the following satisfiability problem, in which $\degree$ is a bound on the out-degrees of the pointed graph and $n$ is the number of bits to represent an element of $\setnumbers$. For instance, for $n = 32$, $\setnumbers$ could be %any of the Examples \ref{example:signedint}, \ref{example:fixedpointarithmetic}
\Cref{example:floatingpointarithmetic}. 
We write $\setnumbers_n$ to emphasize that the numbers are encoded on $n$ bits. In the sequel, we fix a sequence $\setnumbers_0, \setnumbers_1, \setnumbers_2 \dots$ with the following reasonable poly-space assumption for each $\setnumbers_n$:
%For instance, To fix the ideas, we set $\setnumbers_n =  \set{-2^{n-1}, \dots, 2^{n-1}}$.
\begin{enumerate}
    % \item $\setnumbers_n$ is a set of numbers encoded with $n$ bits
    \item there is a uniform algorithm, taking $n$, $k_1$, $k_2$ and $k$ in $\setnumbers_n$ as an input and checking whether $k_1{+}k_2{=}k$ in poly-space in $n$ (same for whether $k_1{=}k_2$, $k{\times} k_1{=}k_2$).
    \item given $n$ written in unary, $k, k' \in \setnumbers_n$ written in binary, deciding whether
$\semone{\activationfunction}(k') = k$
can be performed in polynomial space in $n$. This assumption is weak and is true for a large class of activation functions $\semone{\activationfunction}$ (for instance $ReLU$, $truncatedReLU$ for fixed-point or floating-point arithmetics on $n$ bits).
\end{enumerate}

\begin{definition}[satisfiability problem]
\label{definition:satisfiabilityproblem}
The satisfiability problem of \thelogic{} is:
    \begin{itemize}
        \item input: an integer $n$ written in unary, an integer $\degree$ (or $\degree = +\infty$) in unary or in binary, a formula $\Phi$ in \thelogic{}
        \item question: does there exist a pointed graph with out-degree at most $\degree$ with features in a set $\setnumbers_n$ that satisfies~$\phi$?
    \end{itemize}
\end{definition}

Note that $n$ is written unary is reasonable since $n$ is precisely of the same order of magnitude as the size of the representation of an element in $\setnumbers_n$, and we have to store the bits of numbers in $\setnumbers_n$ fully in memory anyway.

Bounding the out-degree $\delta$ is justified in applications such as chemistry.
Generally, it is sufficient to represent $\degree$ in unary since the bound is often small. For instance, a carbon atom has $4$ neighbors. Having $\degree$ in unary means also that we can keep all the successors/neighbors of a given vertex in memory at once.
A graph can be exponential in $\degree$; a tree may contain $\degree^{depth}$ nodes, making the search space exponential in the input size when the depth is unbounded. Using a binary representation for $\delta$ is also interesting since the search space is already exponential with a fixed depth. For instance, we could search for a pointed graph which is a tree of depth $1$ where the root has an exponential number of children.

\subsection{Tableau Method}

\begin{figure*}
\small
    \centering

%%%%% FIRST COLUMN
\begin{minipage}{0.45\linewidth}
     \begin{minipage}{0.45\textwidth}
 \begin{prooftree}
     \AxiomC{$(w~\phi \lor \psi)$}
     \rulelabel{\lor}
     \UnaryInfC{$(w~\phi) \mid (w~\psi)$}
 \end{prooftree}

 \begin{prooftree}
     \AxiomC{$(w~\lnot(\phi \lor \psi))$}
     \rulelabel{\lnot\lor}
     \UnaryInfC{$(w~\lnot \phi), (w~\lnot \psi)$}
 \end{prooftree}

 \begin{prooftree}
     \AxiomC{$(w~\phi \land \psi)$}
     \rulelabel{\land}
     \UnaryInfC{$(w~\phi), (w~\psi)$}
 \end{prooftree}
 \end{minipage}
 ~
 \begin{minipage}{0.45\textwidth}
 \begin{prooftree}
     \AxiomC{$(w~\lnot (\phi \land \psi))$}
     \rulelabel{\lnot\land}
     \UnaryInfC{$(w~\lnot \phi) \mid (w~\lnot \psi)$}
 \end{prooftree}

 \begin{prooftree}
     \AxiomC{$(w~\lnot \lnot \phi)$}
     \rulelabel{\lnot\lnot}
     \UnaryInfC{$(w~ \phi)$}
 \end{prooftree}
 \end{minipage}

\begin{prooftree}
    \AxiomC{$ (w~\lnot(\expression \geq k))$}
    \rulelabel{\lnot\geq}
    \UnaryInfC{$(w~\expression=k')$ for some $k' \in \setnumbers_n$ with $k' <_{\setnumbers_n} k$}
\end{prooftree}

%\begin{prooftree}
%	\AxiomC{$ (w~\lnot(\expression \geq k))$}
%	\rulelabel{\lnot \geq}
%	\UnaryInfC{$(w~\expression=-2^{n-1}) \mid (w~\expression=-2^{n-1}-1) \mid \dots \mid (w~\expression=k-1)$}
%\end{prooftree}

\begin{prooftree}
\AxiomC{$(w~\expression_1 + \expression_2 = k)$}
\rulelabel{+}
\UnaryInfC{\begin{minipage}{0.8\textwidth}
$(w~\expression_1 = k_1), (w~\expression_2 = k_2)$ for some $k_1, k_2 \in \setnumbers_n$,  with $k_1 +_{\setnumbers_n} k_2 = k$
\end{minipage}
}
\end{prooftree}

\begin{prooftree}
    \AxiomC{$(w~\phi)$ and no term $(w~degree=...)$}
\rulelabel{degree}
\UnaryInfC{$(w~degree=\delta')$ for some $\delta'\leq \delta$}
\end{prooftree}
\end{minipage}
%%%%% SECOND COLUMN
\begin{minipage}{0.45\linewidth}

\begin{prooftree}
\AxiomC{$(w~c = k)$ if $c \neq k$}
\rulelabel{clash_c}
\UnaryInfC{fail}
\end{prooftree}

\begin{prooftree}
\AxiomC{$(w~x_i = k), (w~x_i = k')$ if $k \neq k'$}
\rulelabel{clash_=}
\UnaryInfC{fail}
\end{prooftree}

\begin{prooftree}
\AxiomC{$(w~\activationfunction(\expression) = k)$}
\rulelabel{\activationfunction}
\UnaryInfC{$(w~\expression = k')$ for some $k' \in \setnumbers_n$ with $\semone{\activationfunction}(k') = k$}
\end{prooftree}

\begin{prooftree}
\AxiomC{$(w~\expression \geq k)$}
\rulelabel{\geq}
    \UnaryInfC{$(w~\expression=k')$ for some $k' \in \setnumbers_n$ with $k' \geq_{\setnumbers_n} k$}
\end{prooftree}

%\begin{prooftree}
%	\AxiomC{$(w~\expression \geq k)$}
%	\rulelabel{\geq}
%	\UnaryInfC{$(w~\expression = k) \mid (w~\expression = k+1) \mid \dots \mid (w~\expression = 2^{n-1}-1)$}
%\end{prooftree}

\begin{prooftree}
\AxiomC{$(w~c\expression = k)$}
\rulelabel{\times}
\UnaryInfC{$(w~\expression = k')$ for some $k' \in \setnumbers_n$ with $c \times_{\setnumbers_n} k' = k$}
\end{prooftree}

\begin{prooftree}
\AxiomC{$(w~\agg(\expression) = k)$}
\AxiomC{$(w~degree=\delta')$}
\rulelabel{\agg}
\BinaryInfC{
\begin{minipage}{0.8\textwidth}
$(w1~\expression = k_{1}), \dotsc, (w \degree'~\expression = k_{{\degree'}})$ for some $(k_u)_{u=1..\degree'}$, with $k_{1} +_{\setnumbers_n} 
 \dotsb +_{\setnumbers_n} k_{\degree'} = k$
\end{minipage}
}
\end{prooftree}

\end{minipage}

    \caption{Tableau rules for the satisfiability problem in \thelogic{} when $\delta$ is written in unary.
%    The rules for the Boolean connectives are standard, and are omitted for reasons of space.
}
    \label{figure:tableaurules}
\end{figure*}

We propose a tableau method (\cite{DBLP:books/el/RV01/Hahnle01,DBLP:books/daglib/0032750}) to reason in our logic, i.e. to check whether a formula $\Phi$ is $\delta$-satisfiable. For presentation purposes, we focus on the case where the degree bound $\degree$ is finite and given in unary. The cases where $\degree$ is written in binary or $\delta = +\infty$ are more involved, and are discussed in \citesuplementarymaterial.

The idea of tableau methods is to construct a model satisfying an initial formula by propagating constraints and creating new vertices when needed.
We consider terms of the form $(w~\phi)$ where $w$ is a word (that intuitively denotes a vertex in the graph in construction) and $\phi$ is a formula. The intuitive meaning of $(w~\phi)$ is that $\phi$ should be true in the vertex $w$.

We start the tableau method with the following term $(\epsilon ~ \Phi)$, saying that $\Phi$ should be true in the initial vertex, denoted by the empty word $\epsilon$.
Figure~\ref{figure:tableaurules} shows our tableau rules. % The rules for the Boolean connectives are standard.

The rules for $(\lnot\geq), (\geq), (+), (\activationfunction), (\times)$ exploit the fact that the set $\setnumbers_n$ is finite. For instance, rule $(\geq)$ non-deterministically chooses a value for $\expression$ that is greater than $k$ in $\setnumbers_n$.
The rule $(+)$ on the term $(w~\expression_1+\expression_2=k)$ (whose meaning is that the value of $\expression_1+\expression_2$ should be equal to $k$) non-deterministically chooses values $k_1$ and $k_2$ for $\expression_1$ and $\expression_2$ respectively so that $k_1 + k_2 = k$.
The rule $(clash_c)$ stops the current execution when the constant $c$ is different from the value $k$ to which it is constrained to be equal. Similarly, the rule $(clash_=)$ clashes when the feature $x_i$ at the vertex denoted by $w$ is supposed to be equal to $k$ and $k'$ with $k\neq k'$. The rules $(\activationfunction)$ and $(\times)$ work similarly to the rule $(+)$ but for the activation function and for multiplication by a scalar. 
The rule $(degree)$ non-deterministically guesses the out-degree of the vertex specified by~$w$.

The rule $(agg)$ is `modal' and creates new successors $w1, \dots, w\degree'$ where $w1$ is the word $w$ on which we concatenated the letter $1$, and so on. When the aggregation for the expression $\expression$ is supposed to be equal to $k$ and the out-degree is supposed to be $\delta'$, we create $\degree'$ successors and impose that the sum of the values of $\expression$ taken in these successors is equal to $k$ (to this aim, we non-deterministically choose values $k_1, \dots, k_{\degree'}$ for $\expression$ in these successors).

 The proposed tableau method is a non-deterministic procedure that applies as long as possible the rules given in Figure~\ref{figure:tableaurules} from the initial set $\set{(\epsilon~\Phi)}$.%, where $\epsilon$ is the empty word. 
 The method either fails or stops successfully if no more rules are applicable.

%\todo{to do with fix-point arithmetics}
%\begin{center}
%    \begin{algo}
%        \begin{algoblocprocedure}
%            {$check(E, y)$}%
%
%\begin{algomatch}
 %   {$\expression$}
    
%\algocase $x_i$: 
 %   
%\end{algomatch}

 %       \end{algoblocprocedure}
 %   \end{algo}
%\end{center}

%The interesting reader may have a look to 
A prototype implementation\footnote{\url{https://github.com/francoisschwarzentruber/ijcai2025-verifquantgnn}}
of the proof system is presented in 
%Section~\ref{appendix-section-implementation} of 
 \citesuplementarymaterial.

\subsection{Upper Bound}

\begin{theorem}
\label{th:satlogPSPACE}
The satisfiability problem of $\thelogic$ is in PSPACE, whether $\delta$ is given in unary, binary or is infinite.
\end{theorem}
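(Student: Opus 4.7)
The plan is to implement the tableau method of Figure~\ref{figure:tableaurules} (and its generalizations for $\delta$ in binary or $\delta = +\infty$) as a depth-first search that uses only polynomial space. The core structural observation is that $(\agg)$ is the only rule producing fresh vertices, so the tableau tree has depth bounded by the $\agg$-nesting depth of $\Phi$. Since formulas are represented as DAGs, this depth is polynomial in $|\Phi|$. At each tableau vertex $w$ the local state consists of one value of $\setnumbers_n$ per subexpression of $\Phi$, plus a chosen arity $\delta' \le \delta$. Each $\setnumbers_n$-value takes $n$ bits, there are polynomially many subexpressions, and by the polyspace assumptions on $\setnumbers_n$ and on $\semone{\activationfunction}$ all non-modal rules, namely $(\geq), (\lnot\geq), (+), (\times), (\activationfunction), (clash_c), (clash_=)$, run in polynomial space in $n$.

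For the unary case, I would give a non-deterministic algorithm that traverses the tableau tree in depth-first order. At each visited vertex $w$ the algorithm guesses the values at $w$ of all relevant subexpressions, verifies every local term, and for each term $(w~\agg(\xi) = k)$ guesses values $k_1, \dotsc, k_{\delta'}$ with $k_1 +_{\setnumbers_n} \dotsb +_{\setnumbers_n} k_{\delta'} = k$, then recurses on the successor subtrees one at a time, reusing workspace across siblings. The recursion depth equals the $\agg$-nesting depth of $\Phi$, which is polynomial, and each frame stores polynomially many objects of size $n$. Hence the algorithm runs in $\NPSPACE$, which equals $\PSPACE$ by Savitch's theorem.

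The main obstacle is the regime where $\delta$ is encoded in binary or is infinite, since then a vertex may have exponentially many (or unboundedly many) successors that cannot be enumerated individually, so the guessed tuple $k_1, \dotsc, k_{\delta'}$ of the $(\agg)$ rule must be produced symbolically. The idea is to group successors by \emph{type}, where a type is a complete $\setnumbers_n$-assignment to all subexpressions of $\Phi$; a type fits in polynomial space. Two successors of the same type are interchangeable for every $\agg$-constraint inherited from $w$, so the parent only needs a collection of (type, binary-encoded count) pairs. The only parent obligations are the polynomially many linear equations over $\setnumbers_n$ imposed by the $\agg$-terms at $w$ together with the total-count equation $\leq \delta$, and a support argument on these linear constraints shows that polynomially many distinct types always suffice. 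The algorithm then guesses this bounded collection, checks the arithmetic locally, and recurses once per distinct type; counts are manipulated only through their binary representation, so polynomial recursion depth and polynomial per-frame space are preserved in all three regimes of $\delta$.
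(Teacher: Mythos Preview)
Your treatment of the unary case is correct and matches the paper's: depth-first search over the tableau tree, recursing on the successors of each label one at a time, with recursion depth bounded by the $\agg$-nesting depth of $\Phi$, then Savitch.

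The gap is in the binary/infinite regime. You propose to group successors by type and invoke a ``support argument on these linear constraints'' to conclude that polynomially many distinct types suffice. But the constraints are not linear in any sense that licenses a Carath\'eodory-style support bound: the only hypotheses on $\setnumbers_n$ are the polyspace checks for $+_{\setnumbers_n}$, $\times_{\setnumbers_n}$, $\leq_{\setnumbers_n}$ and for $\semone{\activationfunction}$. Nothing guarantees that $+_{\setnumbers_n}$ is associative or compatible with integer scaling, so the aggregate $\sum_{v}\sem{\xi}{v}$ computed in $\setnumbers_n$ need not decompose as $\sum_t c_t \cdot \text{val}_t$. Under saturating or floating-point addition (the paper's own motivating examples), having ten successors of a given type is not captured by any single pair $(\text{type},10)$ plugged into a linear equation; indeed $c_t \cdot \text{val}_t$ is not even well-defined in $\setnumbers_n$ once $c_t$ exceeds what $\setnumbers_n$ can represent. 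Your step ``polynomially many distinct types always suffice'' is therefore unsupported for general~$\setnumbers_n$.

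The paper's route avoids this entirely. It first bounds the guessed arity $\delta'$ by a quantity whose \emph{binary} encoding is polynomial (via a pigeonhole on the $\setnumbers_n$-value tuples assignable at a successor to the finitely many $\agg$-subexpressions occurring at $w$), so $\delta'$ may be exponential but is stored in polynomially many bits. It then replaces the single $(\agg)$ rule by three rules $(\agg_{\mathit{init}})$, $(\agg_{\mathit{rec}})$, $(\agg_{\mathit{fail}})$ that process the $\delta'$ successors \emph{sequentially}: for each term $(w~\agg(\xi)=k)$ maintain a running cumulative sum $A\in\setnumbers_n$; guess the next successor's value $k'$, update $A \leftarrow A +_{\setnumbers_n} k'$, fully recurse on that single successor, discard its workspace, and increment a binary counter. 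No grouping, no linearity, and no polynomial bound on the number of distinct types is needed; the only data stored at $w$ are the counter, the cumulative sums, and the current DFS branch, all of polynomial size.
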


\begin{proof}
The proof is given for $\delta$ in unary.  For the proof when~$\delta$ in binary or infinite, we can adapt the proof system, see \citesuplementarymaterial.
% , Section~\ref{sec:proofs-arity-binary-or-unbounded}.
   We have to prove that $\Phi$ is satisfiable iff there is an accepting execution of the tableau method.
    
    The \fbox{$\Rightarrow$} direction is proven by taking a pointed graph $G, v$ satisfying $\Phi$ and making the non-deterministic choices by letting oneself be guided by the evaluation of formulas in $G$. 
    To do that, we start the execution of the tableau method with $(\epsilon~~ \phi)$ and we set $f(\epsilon) = v$. We then show that we can always continue the execution maintaining the invariant:
    the partial mapping $f$ from the labels in the tableau to $V$ is such that if $(w~\phi)$ appears in the tableau then $G, f(w) \models \phi$; and if $(w~degree=\delta')$ appears then $f(w)$ is of out-degree $\delta'$.
    
    More precisely, we make the non-deterministic choices so that we can build a partial mapping $f$ from the labels in the tableau to $V$, starting from $f(\epsilon) = v$ so that if $(w~\phi)$ appears in the tableau then $G, f(w) \models \phi$; and if $(w~degree=\delta')$ appears then $f(w)$ is of out-degree $\delta'$. 
    
    Let us explain the principle with the $(agg)$ rule. Suppose $(w~\agg(\expression) = k)$ and $(w~degree=\delta')$ appears in the tableau. By the invariant, it means that $G, f(w) \models \agg(\expression) = k$ and that $f(w)$ is of out-degree $\delta'$. But then $f(w)$ has $\delta'$ successors: $u_1, \dots, u_{\delta'}$. As $G, f(w) \models \agg(\expression) = k$, there exists $k_{1}, \dots, k_{{\degree'}}$ with $k_{1} +_{\setnumbers_n} 
 \dotsb +_{\setnumbers_n} k_{\degree'} = k$ with $G, u_i \models \expression = k_i$ for all $i=1..\delta'$.
We then set $f(w1) := u_1, \dots, f(w\delta') = u_{\delta'}$. We also apply the rule $(agg)$ by choosing $(w1~\expression = k_{1}), \dotsc, (w \degree'~\expression = k_{{\degree'}})$. The invariant remains true.

By the invariant, the premises of rules $(clash_c)$ or $(clash_=)$ are never in the tableau. For instance, if some $(w~~c=k)$ with $c \neq k$ appears, we would $G, w \models c=k$ which is a contradiction.

    The \fbox{$\Leftarrow$} direction is proven by constructing a pointed graph $G, v$ from the choices made in the accepting execution. The graph $G = (V, E)$ is defined with $V$ the set of labels $w$ and $E$ is the set of edges of the form $(w, wu)$ with $w$ a label and $u \in \set{1, \dots, \delta'}$ with $(w~~degree=\delta')$ appearing in the tableau. The label is $\ell(w) = (v_1, \dots, v_m)$ with $v_i$ such that $(w~x_i=v_i)$ appears in the tableau, or $v_i$ can be any number in $\setnumbers_n$ if no term $(w~x_i=...)$ appears.

    We then prove that $G, \epsilon \models \phi$. To do that, we prove that if $(w~~\psi)$ appears in the final tableau, then $G, w \models \psi$ by induction on $\psi$.
    We show it for $\psi =: \agg(\expression) = k$.
    Suppose that $(w~~\agg(\expression) = k)$ appears in the final tableau. Then it means that the rule $(\agg)$ has been applied on it. So $(w1~\expression = k_{1}), \dotsc, (w \degree'~\expression = k_{{\degree'}})$ for some $(k_u)_{u=1..\degree'}$, with $k_{1} +_{\setnumbers_n} 
 \dotsb +_{\setnumbers_n} k_{\degree'} = k$ appear in the final tableau. By induction, it means that $G, wi \models \expression = k_{i}$. So $G, w \models \agg(\expression) = k$.
Finally, as $(\epsilon~~\phi)$ is in the tableau, the induction hypothesis gives $G, \epsilon \models \phi$.

It can be implemented in poly-space, in the same spirit as the tableau method for modal logic K (\cite{DBLP:journals/ai/HalpernM92}). The idea is to backtrack on rules $(\agg)$: first we try to add all the $(w1~\expression = k_{1})$ for all applications of $(\agg)$, then try to add the $(w2~\expression = k_{2})$ and so on. The algorithm is then a depth-first-search on the tree relying on the labels $w$ (see \Cref{figure:searchtree}). We conclude by Savitch's theorem (\cite{DBLP:journals/jcss/Savitch70}) to obtain a deterministic algorithm.
\end{proof}

\begin{figure}
	\centering
		\small
	\begin{tikzpicture}[
		every node/.style = {inner sep=0.5mm},
		level distance = 6mm,
		edge from parent/.style = {draw, -latex},
		sibling distance = 2cm, % Set a default sibling distance
		level 2/.style = {sibling distance=1cm}, % Custom sibling distance for level 2
		]
		
		\node[fill=red!30, draw=red] (r) {$\epsilon$}
		child {node {1}
			child {node {11}}
			child {node {12}}
		}
		child {node[fill=red!30, draw=red] (2) {2}
			child {node[fill=red!30, draw=red] (21) {21}}
			child {node {22}}
		}
		child {node {3}
			child {node {31}}
			child {node {32}}
		};
	
	\path[draw=red, thick, -latex] (r) -- (2);
	\path[draw=red, thick, -latex] (2) -- (21);
		
	\end{tikzpicture}
	\caption{Depth-first search tree where each node contains terms of the form $(w~~...)$ for a given word $w$ among $\epsilon, 1, 2, 3, 11, 12, 21, 22, 31, 32$. Only a single branch is in memory.
 }
  \label{figure:searchtree}
 
\end{figure}

Now, Theorem~\ref{th:reduction} and \ref{th:satlogPSPACE} imply an upper bound for LVP. 
Actually, we consider a quantized variant of LVP. Extending Definition~\ref{definition:lvp}, that problem also takes $n$ in unary and $\degree$ in unary or in binary (as in Definition~\ref{definition:satisfiabilityproblem}).

\begin{corollary}
\label{cor:LVPPSPACE}
    LVP is in PSPACE.
\end{corollary}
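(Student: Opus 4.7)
The plan is to combine Theorems~\ref{th:reduction} and~\ref{th:satlogPSPACE} in the standard way that a polynomial-time reduction composes with a PSPACE decision procedure, together with the closure of PSPACE under complement.

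Concretely, given an LVP instance $I = (N, L_{\text{in}}, L_{\text{out}})$, I would first invoke Theorem~\ref{th:reduction} to compute the $\thelogic{}$ formula $\varphi_I$. Since that construction runs in polynomial time in $|N| + |L_{\text{in}}| + |L_{\text{out}}|$, the resulting $\varphi_I$ has size polynomial in the size of $I$ and can be produced on a work tape of polynomial size. By the statement of Theorem~\ref{th:reduction}, $I$ is valid iff $\varphi_I$ is unsatisfiable, so deciding validity of $I$ reduces to deciding \emph{unsatisfiability} of $\varphi_I$.

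Next, I would appeal to Theorem~\ref{th:satlogPSPACE}, which gives a deterministic polynomial-space decision procedure for satisfiability of $\thelogic{}$. Since $\mathrm{PSPACE} = \mathrm{coPSPACE}$ (immediate by complementing the deterministic machine, or more generally by Savitch's theorem applied to the tableau method discussed in the proof of Theorem~\ref{th:satlogPSPACE}), unsatisfiability of $\varphi_I$ is also decidable in polynomial space in $|\varphi_I|$, hence in polynomial space in $|I|$.

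Composing these two steps yields a polynomial-space algorithm for LVP: compute $\varphi_I$ on the fly and then run the PSPACE algorithm for unsatisfiability of $\thelogic{}$ on~$\varphi_I$. There is no real obstacle, since all of the work has already been carried out in Theorems~\ref{th:reduction} and~\ref{th:satlogPSPACE}; the only thing to be mildly careful about is that the polynomial bound on $|\varphi_I|$ from Theorem~\ref{th:reduction} relies on the DAG representation of formulas, which is exactly the representation assumed by the satisfiability procedure, so the two steps compose cleanly.
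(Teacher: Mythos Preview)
Your proposal is correct and is exactly the argument the paper intends: the corollary is stated immediately after Theorems~\ref{th:reduction} and~\ref{th:satlogPSPACE} with the one-line justification that they together imply the upper bound, and you have simply spelled out the composition (polynomial-time reduction followed by the PSPACE satisfiability procedure, plus $\mathrm{PSPACE}=\mathrm{coPSPACE}$) in more detail than the paper does.
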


Note that when $\degree$ is in unary, the satisfiability problem is in NP when the number of nested $\agg$ operators in the input formula is bounded by a fixed integer $i$. Indeed, the certificate is a pointed labelled tree of depth $i$ of out-degree at most $\delta$ (it contains $O(\delta^i)$ vertices). In consequence, LVP when we bound the number of layers by~$i$ is in coNP.

\subsection{Usage Example}

\label{sec:complete-example}
We assume that $\setnumbers$ is signed $32$-bit fixed-point numbers, with four decimal places precision, similar to Example~\ref{example:fixedpointarithmetic}.
The number $\frac{1}{125}$ is stored as (the binary encoding of the decimal number) $80$, $\frac{1}{1000}$ is stored as $10$, $0.9$ is stored as $9000$, $1000$ is stored as $10000000$, $250$ is stored as $2500000$, etc.
%
% We also assume that the maximum out-degree $\degree$ is 5.

Elaborating on \Cref{ex:gnn_social_network}, consider the GNN $N = (\layer_1, \layer_{out})$ where $\layer_1 = (\agg_1, \comb_1)$, $\agg_1 = \sum$, $\comb_1((x_1), (\aggvar_1)) = \begin{pmatrix}
    ReLU(+\frac{1}{125}x_1 + 0\aggvar_1 + 0) \\
    Id(+0x_1 + \frac{1}{1000}\aggvar_1+ 0)
    \end{pmatrix}
    $.
So $N$ has only one layer, with input dimension $1$ (the value of $x_1$ that labels a node indicates the number of messages per minute sent by the individual), and output dimension~$2$. The output layer is the identity.
Consider again the LVP $I = (N, L_{\text{in}}, L_{\text{out}})$ where $L_{\text{in}} = x_1 \geq 100$ and $L_{\text{out}} = y_1 \geq 0.6$.

\paragraph{$\thelogic{}$ formula.}

As per \Cref{th:reduction} and its proof,
the LVP $I = (N, L_{\text{in}}, L_{\text{out}})$ is valid iff $\phi_I = \phi_{\text{in}} \land \phi_N \land \lnot \psi_{\text{out}}$ is not satisfiable, where:
\begin{itemize}
    \item $\phi_{\text{in}} := x_1 \geq 100$
    \item $\phi_N := ReLU(\frac{1}{125} x_1) - y_1 = 0 \land \frac{1}{1000}\agg(x_1) - y_2 = 0$
    \item $\psi_{\text{out}} := y_1 \geq 0.6$
\end{itemize}

\paragraph{Automated reasoning.}

We could show that
$(\agg(x_1) \geq 1000) 
\land \phi_N \land
\lnot (y_2 \geq 1.0)
$ is not satisfiable, establishing that $N$ comes with the guarantee of property $B$. Here we focus on property $A$.
If the pointed graph $(G,v)$ satisfies $L_{\text{in}}$, then $N(G,v)$ will satisfy $L_{\text{out}}$.
We could use the tableau method to automatically show that $\phi_I$ is not satisfiable and therefore that $I$ is valid. Unfortunately, a complete formal proof cannot be reasonably presented here or even done by hand.
Hence, to illustrate how the reasoning method can be used for the verification of quantized GNNs, we slightly modify the LVP~$I$ with a different output constraint.

Let $I'$ be the LVP $(N, L_{\text{in}}, L'_{\text{out}}, \delta)$, where $N$ and $L_{\text{in}}$ are as before, $L'_{\text{out}} := y_1 \geq 0.9$, and $\delta = 5$.
We are going to show that the new output constraint is not always satisfied. 
$I'$ is valid iff $\phi_{I'} := \phi_{\text{in}} \land \phi_N \land \lnot \psi'_{\text{out}}$ is not satisfiable, where:
\begin{itemize}
    % \item $\phi_{\text{in}} = x_1 \geq 100$
    % \item $\phi_N = \frac{1}{125} ReLU(x_1) - y_1 = 0 \land \frac{1}{1000}\agg(x_1) - y_2 = 0$
    \item $\psi'_{\text{out}} := y_1 \geq 0.9$
\end{itemize}

\newcommand{\ruledef}{\Delta}%\text{def}}

We employ our tableau method to prove that $\phi_{I'}$ is satisfiable and therefore that $I'$ is not valid.
We use $\checkmark$ to indicate that we have reached a term for which there are no rules to apply, except possibly $(clash_=)$. If in the end $(clash_=)$ does not apply, then we can read a model of $\phi_{I'}$ where the $\checkmark$s appear. 
We start the tableau method by applying the rule $(\land)$ and resolve inequalities with rules $(\geq)$ and $(\lnot \geq)$ (we write $(\ruledef)$ when we just use the definition of a formula):
\begin{prooftree}\small
\AxiomC{$(\epsilon ~ \phi_{\text{in}} \land \phi_N \land \lnot \psi'_{\text{out}})$}
\rulelabel{\land}
\UnaryInfC{$(\epsilon ~ \phi_{\text{in}}), (\epsilon ~ \phi_N), (\epsilon ~  \lnot \psi'_{\text{out}})$}
\end{prooftree}

\noindent
\begin{minipage}{0.15\textwidth}
\begin{prooftree}\scriptsize
\AxiomC{$(\epsilon ~ \phi_{\text{in}})$}
\rulelabel{\ruledef}
\UnaryInfC{$(\epsilon ~ x_1 {\geq} 100)$}
\rulelabel{\geq}
\UnaryInfC{$(\epsilon ~ x_1 {=} 100)$ $\checkmark$}
\end{prooftree}
\end{minipage}
%\hfill
     \begin{minipage}{0.15\textwidth}
     	\begin{prooftree}\scriptsize
     		\AxiomC{$(\epsilon ~ \phi_N)$}
     		\rulelabel{\ruledef}
     		\UnaryInfC{$(\epsilon ~ C_1 {\land} C_2)$}
     		\rulelabel{\land}
     		\UnaryInfC{$(\epsilon ~ C_1), (\epsilon ~ C_2)$}
     	\end{prooftree}
\end{minipage}
%\hfill
     \begin{minipage}{0.15\textwidth}
\begin{prooftree}\scriptsize
	\AxiomC{$(\epsilon ~ \lnot \psi'_{\text{out}})$}
	\rulelabel{\ruledef}
	\UnaryInfC{$(\epsilon ~ \lnot y_1{\geq} 0.9)$}
	\rulelabel{{\lnot} {\geq}}
	\UnaryInfC{$(\epsilon ~ y_1 {=} 0.8)$ %(non-det) 
		$\checkmark$}
\end{prooftree}
\end{minipage}

\noindent
where we write $C_1$ for the conjunct $ReLU(\frac{1}{125} x_1) - y_1 = 0$ and $C_2$ for the conjunct $\frac{1}{1000}\agg(x_1) - y_2 = 0$.

% \begin{prooftree}
% \AxiomC{$(\epsilon ~ \phi_N)$}
% \rulelabel{def}
% \UnaryInfC{$(\epsilon ~ \frac{1}{125} ReLU(x_1) + (-1) y_1 = 0 \land \frac{1}{1000}\agg(x_1) + (-1)y_2 = 0)$}
% \rulelabel{\land}
% \UnaryInfC{$(\epsilon ~ \frac{1}{125} ReLU(x_1) + (-1) y_1 = 0), (\epsilon ~ \frac{1}{1000}\agg(x_1) + (-1)y_2 = 0)$}
% \end{prooftree}

We handle the conjunct $C_1$ in $\phi_N$.
\begin{prooftree}\small
\AxiomC{$(\epsilon ~ ReLU(\frac{1}{125} x_1) + (-1) y_1 = 0)$}
\rulelabel{+}
\UnaryInfC{$
(\epsilon ~ ReLU(\frac{1}{125} x_1) = 0.8),
(\epsilon ~ (-1) y_1 = -0.8)
$ %(non-det)
}
\end{prooftree}

\begin{minipage}{0.25\textwidth}
	\begin{prooftree}\small
\AxiomC{$
(\epsilon ~ ReLU(\frac{1}{125} x_1) = 0.8)
$
}
\rulelabel{\alpha}
\UnaryInfC{$(\epsilon ~ \frac{1}{125} x_1 = 0.8)$} %(non-det)
% \end{prooftree}
% \begin{prooftree}\small
% \AxiomC{$(\epsilon ~ ReLU(x_1) = 100)$}
\rulelabel{\times}
\UnaryInfC{$(\epsilon ~ x_1 = 100)$ $\checkmark$} %(non-det)
\end{prooftree}
\end{minipage}
% Note that the $(\alpha)$ rule is non-deterministic in general, but only one value applies here.
%
\begin{minipage}{0.15\textwidth}
\begin{prooftree}\small
\AxiomC{$
(\epsilon ~ (-1) y_1 = -0.8)
$
}
\rulelabel{\times}
\UnaryInfC{$(\epsilon ~ y_1 = 0.8)$ $\checkmark$} %(non-det)
\end{prooftree}
\end{minipage}

Finally, we handle the conjunct $C_2$ in $\phi_N$.
\begin{prooftree}\small
\AxiomC{$(\epsilon ~ \frac{1}{1000}\agg(x_1) + (-1)y_2 = 0)$}
\rulelabel{+}
\UnaryInfC{$
(\epsilon ~ \frac{1}{1000}\agg(x_1) = 1),
(\epsilon ~ (-1)y_2 = -1)
$ %(non-det)
}
\end{prooftree}

\begin{minipage}{0.25\textwidth}
\begin{prooftree}\small
\AxiomC{$
(\epsilon ~ \frac{1}{1000}\agg(x_1) = 1)
$
}
\rulelabel{\times}
\UnaryInfC{$(\epsilon ~ \agg(x_1) = 1000)$ %(non-det)
}
% \end{prooftree}
% \begin{prooftree}\small
% \AxiomC{$
% (\epsilon ~ \frac{1}{1000}\agg(x_1) = 1)
% $
% }
\rulelabel{degree}
\UnaryInfC{$(\epsilon ~ degree = 4)$ $\checkmark$}
\end{prooftree}
\end{minipage}
\begin{minipage}{0.1\textwidth}
\begin{prooftree}\small
	\AxiomC{$
		(\epsilon ~ (-1)y_2 = -1)
		$
	}
	\rulelabel{\times}
	\UnaryInfC{$(\epsilon ~ y_2 = 1)$ %(non-det)
		$\checkmark$}
\end{prooftree}
\end{minipage}

\begin{prooftree}\small
\AxiomC{$(\epsilon ~ \agg(x_1) = 1000)$}
\AxiomC{$(\epsilon ~ degree = 4)$}
\rulelabel{\agg}
\BinaryInfC{$(1 ~ x_1 = 250), \dotsc, (4 ~ x_1 = 250)$ %(non-det) 
$\checkmark$}
\end{prooftree}
% That is, five (which is the maximum degree $\degree$) worlds $w_1, \dotsc w_5$ where $x_1 = 200$ are witnesses of the formula $\agg(x_1) = 1000$.

\paragraph{Obtaining a model.}
We obtain a counterexample $(G,v)$ for the formula $\phi_{I'}$, with the node $v$ labelled $(x_1) = (100)$, which has $4$ successors, all labelled $(x_1) = (250)$. 
So, by \Cref{th:reduction}, $I'$ is not valid.
Indeed, $G,v$ satisfies $L_{in}$. But $N(G,v)$ is $(y_1, y_2) = (0.8, 1)$ which does not satisfy $L'_{out}$.

This illustrates how the proof system serves to verify the properties of quantized GNN, but also to exhibit counterexamples when they exist.

\section{Lower Bounds}

%\subsection{Restricted Cases}

\label{sec:lowerbound}
\newcommand{\setofvaluesforPSPACEhardaritytwo}{\set{ 0, 1, 2}}
\newcommand{\setofvaluesforPSPACEhard}{\set{-2, -1, 0, 1, 2, 3}}

\newcommand{\casezerotwoatwo}{$\text{Case}_{\set{0,1,2}, \delta=2}$}
\newcommand{\caseminustwothree}{$\text{Case}_{\set{-2..3}, satu}$}

If not said otherwise, we suppose now the activation function is truncated ReLU; it corresponds to the settings in many works (\cite{DBLP:conf/iclr/BarceloKM0RS20,ijcai2024,benedikt2024decidability}). Furthermore, to obtain stronger PSPACE lower bound results, we will consider two restrictions:
\begin{itemize}
    \item \casezerotwoatwo: $\setnumbers$ contains $0$, $1$, $2$ %$\setofvaluesforPSPACEhardaritytwo$ 
    and the out-degree is bounded by $\degree{=}2$;
    \item \caseminustwothree: $\setnumbers$ includes $\setofvaluesforPSPACEhard$ and moreover, $\setnumbers$ saturates in the following sense: for all integers $x$, $x \geq 3$ implies $x_\setnumbers \geq_\setnumbers 3_\setnumbers$. In particular, it means that we have no modulo behavior in $\setnumbers$ (e.g., $\setnumbers = \mathbb Z_{10}$ would not saturate since despite $10 \geq 9$ as integers, we would not have $0 = 10 \geq_\setnumbers 9$ modulo $10$).
\end{itemize}
In both cases, $n=3$ bits are enough to represent the numbers.
Our lower bound results rely on \emph{graded modal logic} which extends modal logic with constructions $\gmlbox k \phi$ whose truth conditions is:
    $G, v \models \gmlbox k \phi$  iff there are at least $k$ successors $u$ of $v$ such that $G, u \models \phi$.
We call $\text{GML}_K$ the syntactic fragment 
%of GML 
where occurrences of $\gmlbox k$ are such that $k \leq K$.

\begin{lemma}
The satisfiability of $\fragmentGMLmodallogic$ on graphs wit out-degree at most $2$, and the satisfiability of $\fragmentGML$ is PSPACE-hard.
\label{lemma:GMLonegrapharitytwo}
\label{lemma:GMLthree}
\end{lemma}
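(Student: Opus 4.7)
The plan is to reduce from QBF (deciding truth of quantified Boolean formulas), which is PSPACE-complete, to both satisfiability problems, following the general pattern of Ladner's reduction but adapting it to the two specific settings.

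For the first statement, given a QBF $\Phi = Q_1 x_1 \dots Q_n x_n \, \psi$ with $Q_i \in \{\forall, \exists\}$ and $\psi$ purely propositional over $x_1, \dots, x_n$, I would construct a $\fragmentGMLmodallogic$ formula $\chi_\Phi$ whose models, when restricted to graphs of arity at most $2$, are precisely the depth-$n$ trees witnessing the truth of $\Phi$. Using fresh propositions $d_0, \dots, d_n$ as depth markers and $p_1, \dots, p_n$ to record the chosen truth value of each variable, $\chi_\Phi$ imposes at the root $d_0$ and inductively encodes at each depth $i-1$: every successor satisfies $d_i$; if $Q_i = \exists$, then $\Diamond \top$; and if $Q_i = \forall$, then $\Diamond p_i \land \Diamond \neg p_i$. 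Because the arity is bounded by $2$, the latter conjunct forces exactly two successors, one per truth value of $p_i$. Persistence axioms $p_j \to \Box p_j$ and $\neg p_j \to \Box \neg p_j$ enforced at every depth $\geq j$ carry the chosen assignment down to the leaves, where $\psi$ itself (read over $p_1, \dots, p_n$) is required to hold. It is then routine to check that $\chi_\Phi$ is computable in polynomial time and is satisfiable in a graph of arity at most $2$ iff $\Phi$ is true.

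For the second statement, I would observe that $\gmlbox{1}\phi$ has exactly the semantics of the classical diamond $\Diamond \phi$. Thus modal logic $\mathbf{K}$ sits as a syntactic fragment of $\fragmentGML$, and PSPACE-hardness of $\mathbf{K}$-satisfiability on arbitrary graphs (Ladner's classical theorem) transfers immediately to $\fragmentGML$. Equivalently, the same QBF encoding as above, without the arity-$2$ restriction, already lies in the $\gmlbox{1}$-fragment of $\fragmentGML$.

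The main technical delicacy lies in the first reduction: the arity bound of $2$ must be used essentially to guarantee that the two successors witnessed by $\Diamond p_i$ and $\Diamond \neg p_i$ at a $\forall$-node are the same two successors that subsequent $\Box$-axioms iterate over, so the inductive propagation of $p_j$ values down the tree is not disturbed by a spurious third successor. Verifying this invariant and checking the polynomial size bound of $\chi_\Phi$ is the bulk of the work; the second claim is essentially an immediate observation.
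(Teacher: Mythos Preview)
Your proposal is correct. For the first claim it is essentially the paper's argument: the paper also reduces from TQBF via a Ladner-style binary-tree encoding, the only difference being that it factors the reduction through an intermediate lemma (PSPACE-hardness of modal logic $\mathbf{K}$ restricted to frames of arity at most $2$, call it $K_{2a}$) and then observes that $\Box\phi$ translates as $\lnot\gmlbox{1}\lnot\phi$. Your direct encoding, with depth markers and branching only at $\forall$-levels, is a harmless variant of the same idea.

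For the second claim your route is genuinely simpler than the paper's. The paper reuses the $K_{2a}$ lemma and, since the target class of graphs is now unrestricted, \emph{internalises} the arity bound by conjoining $\bigwedge_{i=0}^{d}(\lnot\gmlbox{1}\lnot)^i\,\lnot\gmlbox{3}\top$ up to the modal depth $d$; this is precisely where the grade $3$ is used and why the statement is phrased for $\fragmentGML$. Your observation that $\gmlbox{1}$ already \emph{is} the classical diamond, so that Ladner's hardness for $\mathbf{K}$ on arbitrary frames transfers verbatim to $\fragmentGMLmodallogic\subseteq\fragmentGML$, bypasses this detour entirely and in fact yields the slightly stronger conclusion that $\fragmentGMLmodallogic$ on unrestricted graphs is already PSPACE-hard. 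The paper's approach, on the other hand, has the minor advantage of deriving both parts of the lemma from a single source problem $K_{2a}$.
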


% \begin{lemma}

% \end{lemma}

%\subsection{PSPACE-hardness}
%We refer to these assumptions as the \emph{restricted cases}.

 %The PSPACE-hardness already holds because it embeds modal logic over complete binary trees.
%The values 0, 1, 2 encode that a formula $\phi$ is respectively false in all successors, true in exactly one successor, true in both successors.

\begin{theorem}
    LVP is PSPACE-complete. PSPACE-hardness holds already for \casezerotwoatwo\ or \caseminustwothree.
    \label{theorem:LVPPSPACEcomplete}
\end{theorem}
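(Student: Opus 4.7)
The PSPACE upper bound is immediate from \Cref{cor:LVPPSPACE}. The contribution is to show matching PSPACE-hardness under the two stated restrictions. The plan is to reduce, in polynomial time, satisfiability of the appropriate fragment of graded modal logic to the \emph{non-validity} of LVP, invoking \Cref{lemma:GMLonegrapharitytwo}: $\fragmentGMLmodallogic$ on graphs of arity at most $2$ for \casezerotwoatwo, and $\fragmentGML$ on graphs of unbounded arity for \caseminustwothree. Since PSPACE is closed under complement, this yields the desired lower bound.

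Given a GML formula $\phi$ over atoms $p_1,\dotsc,p_r$, I would build a GNN $N_\phi$ with $r$ input features and one scalar output, using truncated ReLU as the activation, such that on every pointed graph $(G,v)$ whose features encode a propositional valuation (that is, $x_i \in \{0,1\}$ with $x_i(u)=1$ iff $u$ satisfies $p_i$), the output of $N_\phi$ at $v$ equals the truth value of $\phi$ at $v$. The LVP instance produced by the reduction is $(N_\phi, L_{\text{in}}, L_{\text{out}}, \delta{=}k)$ with $L_{\text{in}}$ the system $0 \leq x_i \leq 1$ for each feature and $L_{\text{out}}$ the single inequality $y \leq 0$. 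For the specific $\setnumbers$ chosen to witness hardness ($\{0,1,2\}$ in the first case, $\{-2,\dotsc,3\}$ with saturation at $3$ in the second), the constraint $0\leq x_i \leq 1$ forces $x_i \in \{0,1\}$, so this LVP instance is invalid iff $\phi$ is satisfiable.

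The GNN $N_\phi$ is assembled inductively over $\phi$ via standard linear-threshold gadgets: $\neg x \equiv 1-x$, $x \wedge y \equiv \activationfunction(x+y-1)$, and $x \vee y \equiv \activationfunction(x+y)$. A subformula $\gmlbox{k}\psi$ is realized by an additional combination-aggregation layer that, given the already-computed $\{0,1\}$-value of $\psi$ at each vertex, outputs $\activationfunction(\agg(\psi)-k+1)$, which equals $1$ iff at least $k$ successors satisfy $\psi$. Each modal depth step contributes one new GNN layer, the combination FNNs have size linear in $|\phi|$, and the reduction is clearly polynomial.

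The main obstacle is checking that every intermediate arithmetic value stays in $\setnumbers$ and faithfully computes the GML semantics. For \casezerotwoatwo, the only modality used is $\gmlbox{1}$, arity-$2$ aggregated sums lie in $\{0,1,2\}$, and every gadget output lies in $\{0,1\}$; no negative intermediate values are needed. For \caseminustwothree, the AND-gadget requires $-1$ and the encoding of $\gmlbox{3}$ requires the value $-2$, which justifies demanding $\{-2,\dotsc,3\} \subseteq \setnumbers$; since vertices may have arbitrarily many successors, raw sums could overflow, but the saturation hypothesis caps them at $3$ while preserving the truth of $\agg(\psi) \geq k$ for every $k \leq 3$, exactly the values appearing in $\fragmentGML$. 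Once these layer-wise arithmetic invariants are established, correctness is a routine structural induction on $\phi$ matching the inductive clauses of GML semantics.
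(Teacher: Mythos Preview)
Your approach coincides with the paper's: reduce the appropriate fragment of GML (via \Cref{lemma:GMLonegrapharitytwo}) to the complement of LVP by building a truncated-ReLU GNN that evaluates $\phi$, and then argue that the quantized arithmetic faithfully tracks the Boolean semantics. The paper simply invokes the transformation $\tau$ of \citet{DBLP:conf/iclr/BarceloKM0RS20} and inspects the ranges of its matrix coefficients and biases; your explicit gadgets are essentially that same construction written out.

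Two wrinkles in your write-up deserve attention. First, the assertion that ``no negative intermediate values are needed'' in \casezerotwoatwo\ contradicts your own $\land$-gadget $\activationfunction(x+y-1)$, which produces $-1$ when $x=y=0$. Either replace it by the De~Morgan form $1-\activationfunction(2-x-y)$, whose argument stays in $\{0,1,2\}$, or invoke the paper's standing assumption that $-1\in\setnumbers$.

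Second, and more substantively, by the definition of LVP the system $L_{\text{in}}$ constrains only the features of the \emph{pointed} vertex $v$; labels at the other vertices of $G$ are unrestricted. Hence your constraint $0\le x_i\le 1$ does not force Boolean labels throughout the graph, and your claimed equivalence ``invalid iff $\phi$ satisfiable'' is missing the backward direction: a witness $(G,v)$ of non-validity might carry $x_i(u)=2$ at some neighbour $u$. The paper sidesteps this by taking $L_{\text{in}}=\top$ and relying on the fact that, in the Barcel\'o et~al.\ construction with truncated ReLU, every coordinate lies in $\{0,1\}$ after one layer and, after sufficiently many identical layers, correctly encodes its subformula irrespective of the initial labelling; thus output $1$ still yields a genuine propositional model of $\phi$. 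You need the same stabilisation argument to close your reduction.
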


\begin{proof}
PSPACE-membership is stated in Corollary~\ref{cor:LVPPSPACE}. The lower bound relies on the poly-time transformation $\tau$ of a GML-formula $\phi$ into an equivalent GNN $\aGNN = \tau(\phi)$ in the sense that $\aGNN(G, v)$ satisfies $x_1 \geq 1$, given in 
\citeauthor{DBLP:conf/iclr/BarceloKM0RS20}~\shortcite[proof of Proposition~4.1]{DBLP:conf/iclr/BarceloKM0RS20}.
%\citet[proof of Proposition~4.1]{DBLP:conf/iclr/BarceloKM0RS20}. 
More specifically, we carefully analyze the GNN $N$ when the formula $\phi$ is in $\text{GML}_K$.

In $N$, the activation function is truncated ReLU. 
The layers in $\tau(\phi)$ are all the same. The matrices $C$ contain coefficients $-1$, $0$, $1$ and all coefficients on a column are $0$ except two at most.
The matrices $A$ only contain $0$ and $1$. On each column of $A$, all coefficients are $0$ except one which may be~$1$.
Coefficients in the biases $b$ are in $\set{ -K{+}1, -K{+}2, \dots, 1}$.

Propositional 
variables in $\phi$ are among $p_1, \dots, p_J$, 
$\aGNN(G, v)$ starts with the % initial 
state 
$(\bs{x}^0_u)_j = 1$, if $j \in \set{1, \dots, J}$ and $p_j$ holds in $u$, and $(\bs{x}^0_u)_j = 0$ otherwise.
% $(\bs{x}^0_u)_j = \begin{cases}
%     1 \text{ if $j \in \set{1, \dots, J}$ and $p_j$ holds in $u$} \\
%     0 \text{ otherwise.}
% \end{cases}$
%
 As the activation function is truncated ReLU, all vectors $\bs{x}^i_u$ are $0$--$1$-vectors.

Given the coefficients in the matrices, intermediate values are all integers.% The numbers in the computation of $\aGNN(G, v)$ stay in $\setofvaluesforPSPACEhard$. 
We end the proof of our Theorem~\ref{theorem:LVPPSPACEcomplete} by stating the reductions and arguing that the computation is \emph{safe} when done in~$\setnumbers$, meaning that there is no overflow and that the computation in~$\setnumbers$ and in 
%the set of integers 
$\mathbb{Z}$ yields the same result.
    
\newcommand{\LVPdual}{\ensuremath{\compl{\text{LVP}}}\xspace}

\paragraph{Proof for \casezerotwoatwo} 
We reduce the satisfiability of $\fragmentGMLmodallogic$ on graphs of out-degree at most $2$ which is PSPACE-hard (\Cref{lemma:GMLonegrapharitytwo}) into the dual problem \LVPdual of LVP with $\delta {=} 2$: from $\phi$ a $\fragmentGMLmodallogic$-formula, 
we compute the \LVPdual-instance $(\tau(\phi), \top, y_1 \leq 0, \degree{=}2)$, and $\phi$ is satisfiable on graphs of out-degree $2$ iff $(\tau(\phi), \top, y_1 \leq 0, \degree{=}2)$ is a negative instance of LVP. \LVPdual is PSPACE-hard, so is LVP.
The computation is safe because there are at most two successors, so we never exceed $2$ in a computation in $\setnumbers$.

\paragraph{Proof for \caseminustwothree} 
 We reduce the satisfiability of $\fragmentGML$ which is PSPACE-hard (\Cref{lemma:GMLthree}) into \LVPdual: given $\phi$ a $\fragmentGML$-formula, we compute the \LVPdual-instance $(\tau(\phi), \top, y_1 \leq 0, \delta{=}{+\infty})$.
We have that $\phi$ is $\fragmentGML$-satisfiable iff there exists $G, v$ s.t.\ $G, v \models \phi$ iff $(\tau(\phi), \top, y_1 \leq 0, \delta{=}{+\infty})$ is a negative instance of LVP. 
The computation is safe because, even if there are an arbitrary large number $x$ of successors, if $x \geq 3$, we also have $x_{\setnumbers} \geq_{\setnumbers}  3_{\setnumbers}$ because $\setnumbers$ saturates.
\end{proof}

\begin{corollary}
\label{corollary-satthelogicPSPACEc}
The satisfiability problem of \thelogic{} is PSPACE-complete, and already PSPACE-hard for \casezerotwoatwo\ or \caseminustwothree, and also when the activation function is ReLU.
\end{corollary}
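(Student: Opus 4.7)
The plan is to combine the three ingredients already available: the PSPACE upper bound of Theorem~\ref{th:satlogPSPACE}, the PSPACE-hardness of LVP in the two restricted settings (Theorem~\ref{theorem:LVPPSPACEcomplete}), and the reduction of LVP into \thelogic{}-unsatisfiability (Theorem~\ref{th:reduction}). Membership in PSPACE is immediate from Theorem~\ref{th:satlogPSPACE}, so everything below concerns the lower bound.

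For PSPACE-hardness of \thelogic{}-satisfiability in the two cases \casezerotwoatwo\ and \caseminustwothree\ with truncated ReLU, I would argue as follows. Theorem~\ref{th:reduction} associates, to any LVP instance $I = (N, L_{\text{in}}, L_{\text{out}})$, a \thelogic{}-formula $\varphi_I$ computable in polynomial time in $|N|+|L_{\text{in}}|+|L_{\text{out}}|$ such that $I$ is valid iff $\varphi_I$ is unsatisfiable. Equivalently, $I$ is \emph{invalid} iff $\varphi_I$ is satisfiable. Hence the complement of LVP polynomial-time reduces to satisfiability of \thelogic{}. Since PSPACE is closed under complement (Savitch), Theorem~\ref{theorem:LVPPSPACEcomplete} immediately transfers: PSPACE-hardness of LVP in \casezerotwoatwo\ (resp.\ \caseminustwothree) yields PSPACE-hardness of \thelogic{}-satisfiability in the same regime, and the set of features and arity bounds are inherited by the formula $\varphi_I$ constructed in Theorem~\ref{th:reduction}.

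For the additional claim about the ReLU activation, the plan is to adapt the construction rather than re-do it from scratch. The key algebraic identity is $\text{truncReLU}(x) = \mathrm{ReLU}(x) - \mathrm{ReLU}(x-1)$, which holds on all of $\ensR$ and a fortiori on $\setnumbers$ in both restricted settings, since the intermediate values involved stay within the same bounded range used in the proof of Theorem~\ref{theorem:LVPPSPACEcomplete}. I would apply this identity at the level of the \thelogic{}-formula $\varphi_I$ produced by the reduction, replacing every sub-expression $\activationfunction(\expression)$ meant to implement truncated ReLU by $\activationfunction(\expression) + (-1)\activationfunction(\expression + (-1))$ with $\semone{\activationfunction} = \mathrm{ReLU}$. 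This substitution is local, preserves polynomial size (the DAG representation allows sharing of $\expression$), and preserves satisfiability and the meaning intended by the Barceló–Kostylev–Monet–Rojas–Silva–ten Cate reduction used inside Theorem~\ref{theorem:LVPPSPACEcomplete}.

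The main obstacle, and the point I would spend most care on, is checking that the rewrite is \emph{safe} in $\setnumbers$: intermediate ReLU-values and their signed combinations must not overflow or wrap. In \casezerotwoatwo, the bound $\arity = 2$ keeps all aggregated quantities in a small integer range, so the values $\mathrm{ReLU}(\expression)$ and $\mathrm{ReLU}(\expression - 1)$ remain in $\setnumbers$ and their difference lies in $\set{-2,\dots,2} \subseteq \setnumbers$. In \caseminustwothree, the saturation hypothesis on $\setnumbers$ ensures that even when aggregations pile up, the comparisons and the identity for $\text{truncReLU}$ are preserved exactly in the saturated regime, so the overall GNN computes the same Boolean answer as with truncated ReLU. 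Once safety is checked, PSPACE-hardness with activation $\mathrm{ReLU}$ follows in both cases by the same reduction as above.
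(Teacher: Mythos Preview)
Your proposal is essentially the paper's own argument: membership from Theorem~\ref{th:satlogPSPACE}, hardness by composing Theorem~\ref{theorem:LVPPSPACEcomplete} with the polynomial-time translation of Theorem~\ref{th:reduction} (using closure of PSPACE under complement), and the ReLU case by locally rewriting truncated ReLU in terms of ReLU inside the formula, relying on the DAG representation to keep the size polynomial. The only cosmetic difference is the identity you use, $\text{truncReLU}(x)=\mathrm{ReLU}(x)-\mathrm{ReLU}(x-1)$, whereas the paper wraps this in one more $\mathrm{ReLU}$, i.e.\ $\mathrm{ReLU}(\mathrm{ReLU}(x)-\mathrm{ReLU}(x-1))$; since the inner difference is always nonnegative the extra application is harmless, and your version is in fact slightly leaner.
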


\section{Related Work}
\label{sec:relatedwork}
Starting with Graded Modal Logic  (\cite{DBLP:journals/ndjfl/Fine72}),
there are numerous logics that capture modal aspects of graphs and express arithmetic constraints, 
(\cite{DBLP:journals/japll/DemriL10,DBLP:conf/frocos/Baader17,DBLP:conf/fsttcs/BednarczykOPT21,DBLP:conf/kr/GallianiKT23,DBLP:journals/bsl/BenthemI23}).
% (\cite{
% DBLP:journals/japll/DemriL10,
% % DBLP:conf/frocos/Baader17,
% DBLP:conf/frocos/BaaderB19,
% % DBLP:conf/ekaw/GallianiRKPT20,
% DBLP:conf/fsttcs/BednarczykOPT21,
% DBLP:conf/kr/GallianiKT23,
% DBLP:journals/bsl/BenthemI23}). 
% At the difference of $\thelogic{}$ which is parameterized with a fixed-width arithmetic, all existing works operate over all integers or rational numbers.

Previous research has already established several correspondences between logic and GNNs. For instance, \citet{DBLP:conf/iclr/BarceloKM0RS20} explore the relationship between graded modal logic and GNNs, while \citet{ijcai2024} examine a modal logic over linear inequalities with counting and its connection to GNNs. Additionally, \citet{benedikt2024decidability} investigates fragments of Presburger logic in the context of GNNs. However, these existing works focus on GNNs with specific activation functions and do not consider the broader class of quantized GNNs. In particular, decidability in PSPACE has been established only for cases where the activation function is either a truncated ReLU (\cite{ijcai2024}) or eventually constant functions (\cite{benedikt2024decidability}). 
\changed{\citet{AhvonenHKL24} offer several logical characterizations of recurrent GNNs over floats and real numbers.}
Also related, is the work of \citet{Grohe23}, which establishes that the graph queries computable by a polynomial-size, bounded-depth family of GNNs are precisely those definable in the guarded fragment % \text{GFO+C} 
of first-order logic with counting and built-in relations. 
% This finding situates GNNs within the circuit complexity class  $\text{TC}^0$.

\citet{HenzingerLZ21} also address the verification of quantized neural networks, but in contrast to this paper they focus on FNN. The authors establish a PSPACE-hardness result, relying on a binary representation of the number of bits. In contrast, if the number of bits were unary, their problem is presumably in NP (using a guess and check argument, relying on similar arguments as used by \cite{SalzerL21}).

There are other aggregation functions, like weighted sums in graph attention networks \cite{DBLP:conf/iclr/VelickovicCCRLB18}, maximum in Max-GNNs \cite{DBLP:conf/kr/CucalaG24}, or average. \thelogic{} and the tableau method can be adapted to capture these aggregation functions, too (see \citesuplementarymaterial).
% Section~\ref{appendix-section-aggregation}). 

On the practical side, \citet{DBLP:conf/aaai/000200DWZB24} presents a solution to the verification of quantized FNNs using heuristic search, and outperforming the approach of \citet{10.1145/3551349.3556916} based purely on integer linear programming.

\section{Conclusion and Perspectives}
\label{sec:outlook}

We introduced a method for explicitly verifying and reasoning about properties of a practical class of quantized Graph Neural Networks. 
% with arbitrary activation functions. 
It allows us to establish new foundational results about the computational complexity of GNNs' tasks.

We proposed a tableau method for reasoning about quantized GNNs using the new logic \thelogic{}. As the domain for numbers is finite, we are able to work out our details using a parameterized class of quantized GNNs, allowing us to consider various activation functions.
We showed that for such classes of GNNs using FNNs in their internal parts, we can answer basic formal reasoning questions using a reduction to the satisfiability problem of \thelogic{}.
We showed that satisfiability of  \thelogic{}
is solvable in PSPACE. 

%\todo{explain that the case undirected can be also handle (cf. inspired from tableau method for KB + cite instead of K}

This work is a first step to further develop concrete methods for reasoning about quantized GNNs and other neural network models. Now, it will be interesting to implement our approach on  a broader scale in order to tackle tasks such as certifying (adversarial) robustness properties (\cite{Gunnemann2022}) or giving formal explanations (\cite{0001I22_formalxai}) for certain GNN behaviours.

\section*{Ethical Statement}
There are no ethical issues.

\section*{Acknowledgments}

This work was supported by the ANR EpiRL project ANR22-CE23-0029, and by the MUR (Italy) Department of Excellence 2023–2027.

% The preparation of these instructions and the \LaTeX{} and Bib\TeX{}
% files that implement them was supported by Schlumberger Palo Alto
% Research, AT\&T Bell Laboratories, and Morgan Kaufmann Publishers.
% Preparation of the Microsoft Word file was supported by IJCAI.  An
% early version of this document was created by Shirley Jowell and Peter
% F. Patel-Schneider.  It was subsequently modified by Jennifer
% Ballentine, Thomas Dean, Bernhard Nebel, Daniel Pagenstecher,
% Kurt Steinkraus, Toby Walsh, Carles Sierra, Marc Pujol-Gonzalez,
% Francisco Cruz-Mencia and Edith Elkind.

%% The file named.bst is a bibliography style file for BibTeX 0.99c
\bibliographystyle{named}
\bibliography{biblio}

% \clearpage
\appendix

\section{Example of GNN With More Dimensions}

We present an example of GNN with more dimensions, and of a LVP instance, and provide all the details of an instance of computation. We can assume $\setnumbers$ to be the set of signed $4$-bit integers.

\paragraph{GNN $N$.}
Let $N = (\layer_1, \layer_2, \layer_{out})$ be a GNN, such that:
\begin{itemize}
    \item $\layer_1 = (\agg_1, \comb_1)$, $\agg_1 = \sum$, and $\comb_1((x_1, x_2)^T, (\aggvar_1, \aggvar_2)^T) = 
    \begin{pmatrix}
    ReLU(+2x_1 - 1x_2 + 1\aggvar_1 - 1\aggvar_2 - 1) \\
    ReLU(+1x_1 + 2x_2 - 2\aggvar_1 - 1\aggvar_2 + 1)
    \end{pmatrix}$
    \item $\layer_2 = (\agg_2, \comb_2)$,
    $\agg_2 = \sum$, and $\comb_2((x_1, x_2)^T, (\aggvar_1, \aggvar_2)^T) = 
    \begin{pmatrix}
    ReLU(+1x_1 - 1x_2 + 2\aggvar_1 + 2\aggvar_2 - 2) \\
    ReLU(+2x_1 + 0x_2 - 2\aggvar_1 - 1\aggvar_2 + 2) \\
    ReLU(-1x_1 + 1x_2 - 2\aggvar_1 + 1\aggvar_2 - 1)
    \end{pmatrix}$
\end{itemize}
Layer $\layer_1$ has input and output dimensions 2, while layer $\layer_2$ has input dimension 2, and output dimension 3. The output layer $\layer_{out}$ is the identity.

\begin{figure}[h]
    \centering
    	\begin{tikzpicture}
		\draw[fill=white, opacity=0.7,rounded corners = 5mm, draw=none] (-2.5, 1.5) rectangle (0.6, -1.5);
		\node[inner sep=0mm, rounded corners=2mm] (0) at (-2, 0) {\labellednode{v}{1}{1}};
		\node[inner sep=0mm] (1) at (0, 1) {\labellednodeinv{v_1}{0}{1}};
		\node[inner sep=0mm] (2) at (0, -1) {\labellednodeinv{v_2}{1}{0}};
		\draw[->] (0) -- (1);
		\draw[->] (0) edge [bend left=10] (2);
		\draw[->] (1) edge [bend left=10] (2);
		\draw[->] (2) edge [bend left=20] (0);
		
	\end{tikzpicture}
    \caption{The vector-labelled graph $G_e$.}
    \label{fig:graph2}
\end{figure}

\paragraph{Output of $N$ on input $G_e, v$.}
Consider the graph $G_e$ as represented in Figure~\ref{fig:graph2}.

Through the first layer,

$y^0(v) 
= \agg_1(\multiset{x^0(v_1), x^0(v_2)}) 
= \agg_2 (\multiset{(0, 1)^T, (1, 0)^T}) 
= (1,1)^T$

$x^1(v) 
= \comb_1(x^0(v), y^0(v)) 
= \comb_1((1, 1)^T, (1, 1)^T) 
=  \begin{pmatrix}
    ReLU(+2 - 1 + 1 - 1 - 1) \\
    ReLU(+1 + 2 - 2 - 1 + 1)
    \end{pmatrix} 
= (0, 1)^T$

$y^0(v_1) 
= \agg_1(\multiset{}) = (0, 0)^T$

$x^1(v_1) 
= \comb_1(x^0(v_1), \aggvar_0(v_1))
= \comb_1((0,1)^T, (0, 0)^T))
= \begin{pmatrix}
    ReLU(+0 - 1 + 0 - 0 - 1) \\
    ReLU(+0 + 2 - 0 - 0 + 1)
    \end{pmatrix}
= (0, 3)^T$

$y^0(v_2)
= \agg_1(\multiset{(1,1)^T}) = (1, 1)^T$

$x^1(v_2)
= \comb_1(x^0(v_2), \aggvar_0(v_2))
= \comb_1((1,0)^T, (1, 1)^T))
= \begin{pmatrix}
    ReLU(+2 - 0 + 1 - 1 - 1) \\
    ReLU(+1 + 0 - 2 - 1 + 1)
    \end{pmatrix}
= (1, 0)^T$

Through the second layer,

$y^1(v) 
= \agg_2(\multiset{x^1(v_1), x^1(v_2)})
= \agg_2(\multiset{(0,3)^T, (1,0)^T})
= (1,3)^T$

$x^2 (v) 
= \comb_2(x^1(v), y^1(v))
= \comb_2((0,1)^T, (1,3)^T)
= \begin{pmatrix}
    ReLU(+0 - 1 + 2 + 6 - 2) \\
    ReLU(+0 + 0 - 2 - 3 + 2) \\
    ReLU(-0 + 1 - 2 + 3 - 1)
    \end{pmatrix}
= (5,0,1)^T$

% We do not show $x^2(v_1)$ and $x^2(v_2)$.

\paragraph{LVP instance.}
Let $I$ be the LVP instance $(N, L_{in}, L_{out})$, where $N$ is the GNN as defined before, $L_{in} := x_1 + x_2 \geq x_1 + x_2$, and $L_{out} := y_1 - y_2 - y_3 \geq 0$. (The linear inequality $L_{in}$ is tautological.)

$(G_e, v) = (1,1)^T$ satisfies $L_{in}$ and $N(G_e,v) = (5,0,1)^T$ satisfies $L_{out}$.

But $I$ is not valid.
To see that, consider the graph $G$ that has only one vertex $v$, labelled $\ell(v) = \begin{pmatrix}
        0\\0
\end{pmatrix}$.

Through the first layer,

$y^0(v) 
= (0,0)^T$

$x^1(v)
=  \begin{pmatrix}
    ReLU(- 1) \\
    ReLU(+ 1)
    \end{pmatrix} 
= (0, 1)^T$

Through the second layer,

$y^1(v) 
= (0,0)^T$

$x^2 (v) 
= \begin{pmatrix}
    ReLU(- 1 - 2) \\
    ReLU(+ 0 + 2) \\
    ReLU(+ 1 - 1)
    \end{pmatrix}
= (0,2,0)^T$

$(G, v) = (0,0)^T$ satisfies $L_{in}$ but $N(G,v) = (0,2,0)^T$ does not satisfy $L_{out}$.

\section{Omitted Proofs}

\subsection{Proof of \Cref{th:satlogPSPACE} (Out-Degree in Binary or Unbounded)}
\label{sec:proofs-arity-binary-or-unbounded}

In this subsection, we prove that \Cref{th:satlogPSPACE} holds when the out-degree is written in binary or is unbounded ($\degree = +\infty$). The unbounded case still means that the branching is finite at each node but unbounded. We prove PSPACE membership by adapting the tableau rules given in Figure~\ref{figure:tableaurules}. The new tableau rules will work both for the cases when the out-degree is written in binary and the case when it is unbounded ($\degree = +\infty$).

 \paragraph{Rule for guessing the out-degree.} First, we modify the rule $(degree)$ as follows:

	\begin{prooftree}
		\AxiomC{$(w~\phi)$ and no term $(w~degree=..)$}
		\rulelabel{degree}
		\UnaryInfC{$(w~degree=\delta')$ for some $\delta'\leq \delta$ with $\delta' \leq 2^n |\phi|$}
	\end{prooftree}
	where this time $\delta'$ is written in binary. The rule $(degree)$ works when $\delta$ is an integer or $+\infty$.
In the above rule, we guess the out-degree of a node $w$ to be bounded by $\delta$ like in the rule in~\Cref{figure:tableaurules}. But, we additionally suppose that $\delta'$ is an integer between $0$ and $2^n |\phi|$ written with $O(n \log |\phi|)$ bits. Why is the bound $2^n |\phi|$ sufficient? The idea is that having $\delta'$ greater than $2^n |\phi|$ is useless because $2^n |\phi|$ is a bound for having all the combinations for $\expression_1 = k_1, \dots, \expression_m = k_m$ for $\agg(\expression_1), \dots, \agg(\expression_m)$ appearing in $w$ (we thus have $m \leq |\phi|)$.

\paragraph{Rules for aggregation. }
Whereas there is a single rule $(\agg)$ in Figure~\ref{figure:tableaurules} for handling the aggregation, we will introduce three rules.

	First, we add the following rule that initializes the computation for $\agg(\expression) = k$:
	
	\begin{prooftree}
		\AxiomC{$(w~\agg(\expression) = k)$}
		\rulelabel{\agg_{init}}
		\UnaryInfC{
			$(w~~0~~\agg(\expression) = k~~0)$
		}
	\end{prooftree}
	
	The term $(w~~i~~\agg(\expression) = k~~A)$ indicates that we have already performed the computation for the successors $w1, \dots, wi$ and the cumulative sum so far is $A$. With the rule $(\agg_{init})$, it means that we did not consider any successors so far and the cumulative sum so far is zero.

The next rule $(agg_{rec})$ is the \emph{recursive} rule and treats the $i$-th successor as follows. It guesses the value $k'$ for feature $\expression$ for the successor $w(i+1)$. It updates the cumulative result so far to $A+k'$.
	
	\begin{prooftree}\small
		\AxiomC{\begin{minipage}{0.8\linewidth}$(w~~i~~\agg(\expression) = k ~~A)~~ (w~~degree=\delta')$ with $i<\delta'$
  \end{minipage}}
		\rulelabel{agg_{rec}}
		\UnaryInfC{\begin{minipage}{0.8\linewidth}
			$(w(i+1)~~\expression {=} k')~~(w~~i{+}1~~\agg(\expression) = k~~A+k')$ for some $k'$
\end{minipage}		
  }
	\end{prooftree}

 The last rule $(agg_{fail})$ says that the tableau method fails if the cumulative value $A$ obtained when all the successors $w1, \dots, w\delta'$ have been considered is different from the expected $k$:

	\begin{prooftree}\small
		\AxiomC{\begin{minipage}{0.8\linewidth}
  $(w~~degree = \delta')~~ (w~~\delta'~~\agg(\expression) = k~~A)$ with $A \neq k$
  \end{minipage}
  }
		\rulelabel{agg_{fail}}
		\UnaryInfC{fail
		}
	\end{prooftree}

	The procedure can still be executed in PSPACE by exploring the successors in a depth-first-search manner. We treat each successor one after another. After treating $wi$ completely, we free the memory for the whole computation for successor $wi$.

\subsection{Proof of \Cref{lemma:GMLonegrapharitytwo}}

Before proving \Cref{lemma:GMLonegrapharitytwo} we focus on modal logic over graphs of out-degree at most $2$.

\subsubsection{Modal logic over graphs of out-degree at most $2$}

Similarly to standard modal logic, the satisfiability problem of modal logic over the class of graphs of out-degree at most $2$, that we denote by $\Ktwoa$, is also PSPACE-hard. Its proof is similar to the one for standard modal logic in \citet{DBLP:books/cu/BlackburnRV01}.

\begin{lemma}
	\label{lemma:K2a}
    The satisfiability problem of $\Ktwoa$ is PSPACE-hard.
\end{lemma}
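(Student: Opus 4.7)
The plan is to reduce the classical PSPACE-complete problem \textsf{QBF} (validity of quantified Boolean formulas in prenex form) to satisfiability of $\Ktwoa$, essentially by adapting Ladner's reduction from \textsf{QBF} to modal logic $\logicK$ satisfiability and observing that it produces formulas whose canonical models are binary-branching trees. Given a closed QBF $\Phi = Q_1 x_1 \, Q_2 x_2 \, \dots \, Q_n x_n \, \phi(x_1,\dots,x_n)$ with $\phi$ quantifier-free, I will build in polynomial time a modal formula $\psi_\Phi$ such that $\Phi$ is true iff $\psi_\Phi$ is $\Ktwoa$-satisfiable.

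First I would introduce propositional variables $d_0,\dots,d_n$ (``depth markers'') and $p_1,\dots,p_n$ (recording the truth value chosen for $x_i$). The formula $\psi_\Phi$ would then be a conjunction enforcing: (i) $d_0$ holds at the root and $d_i$ holds exactly at depth $i$, propagated by modal boxes; (ii) at a $d_{i-1}$-node, if $Q_i = \forall$, then there exist two successors, one satisfying $p_i$ and one satisfying $\lnot p_i$ (using $\ldiamond p_i \land \ldiamond \lnot p_i$), and if $Q_i = \exists$, then at least one successor exists (using $\ldiamond \top$); (iii) the already-chosen values $p_1,\dots,p_{i-1}$ are preserved along all successor edges, so that each leaf at depth $n$ witnesses a full assignment; (iv) at every leaf $d_n$, the propositional formula $\phi$ (with $x_j$ replaced by $p_j$) holds. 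This is exactly the usual encoding, and it is well known that $\psi_\Phi$ is satisfiable in $\logicK$ iff $\Phi$ is true.

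Next I would argue that $\psi_\Phi$ is satisfiable in $\logicK$ iff it is satisfiable on a graph of arity at most $2$. The $(\Leftarrow)$ direction is trivial. For $(\Rightarrow)$, from any $\logicK$-model I extract a tree-unfolded witness, then at each internal node I keep at most two successors: for a universal layer I keep one $p_i$-successor and one $\lnot p_i$-successor; for an existential layer I keep a single successor. Because all remaining modal requirements at that node are of the form $\ldiamond p_i$, $\ldiamond \lnot p_i$ or $\ldiamond \top$, the retained successors suffice; the box-requirements are inherited from the parent model, so no new violation is created. Hence the restricted binary-branching pointed tree still satisfies $\psi_\Phi$.

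Combining the two reductions yields that $\Phi$ is true iff $\psi_\Phi$ is $\Ktwoa$-satisfiable, establishing PSPACE-hardness of $\Ktwoa$. The main (minor) obstacle is the book-keeping in step (iii): ensuring that the depth-indexed propagation formulas $\lbox^j(d_{i-1} \rightarrow (p_k \leftrightarrow \lbox p_k) \land (\lnot p_k \leftrightarrow \lbox \lnot p_k))$ for $k<i$ are written compactly in polynomial size, which is handled by sharing subformulas (as our DAG representation already permits).
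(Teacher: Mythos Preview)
Your proposal is correct and follows essentially the same approach as the paper: a Ladner-style reduction from TQBF to modal satisfiability, observing that the intended model is a depth-$n$ binary tree so that arity $2$ suffices. The paper's version is slightly leaner---it omits depth markers, always forces the full binary branching via $\lbox^i(\ldiamond p_i \land \ldiamond\lnot p_i \land \bigwedge_{j<i}(p_j\leftrightarrow\lbox p_j))$, and encodes the quantifier alternation by a prefix $\ldiamond\lbox\cdots\ldiamond\lbox\chi$ rather than case-splitting on $Q_i$---but this is a cosmetic difference, not a different strategy.
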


\begin{proof}
	By poly-time reduction from TQBF. As said, the proof follows the proof of PSPACE-hardness of modal logic K \cite{DBLP:books/cu/BlackburnRV01}. We give the main argument in order to be self-contained.
	
	Let us take a quantified binary formula $\exists p_1 \forall p_2 \dots \exists p_{2n-1} \forall p_{2n} \chi$ where $\chi$ is propositional.
	The game behind TQBF can be represented by the binary tree in which $p_i$ is chosen at the $i$-th level.

	Modal logic, even restricted to models of out-degree bounded by $2$, can express that the Kripke model contains this binary tree. The following formula explains how the $i$-th level should look like:
 $$\lbox ^i \left( \ldiamond p_i \land \ldiamond \lnot p_i \land \lbigand_{j<i} (p_j \lequiv \lbox p_j) \right)$$

	Let $TREE$ be the conjunction of the above formulas for $i=1..2n$ that forces the Kripke model to contain the binary tree up to level $2n$ (see Figure~\ref{fig:binarytreeforTQBF}). On the input $\exists p_1 \forall p_2 \dots \exists p_{2n-1} \forall p_{2n} \chi$, the reduction computes in polynomial time the modal formula $TREE \land \ldiamond \lbox \dots \ldiamond \lbox \chi$. The former is true iff the latter is satisfiable.

	\begin{figure*}[ht]
		\begin{center}
			\scalebox{0.8}{
				\begin{tikzpicture}[yscale=0.4, xscale=1, level distance=15mm,
					level 4/.style={sibling distance=14mm},
					level 3/.style={sibling distance=26mm},
					level 2/.style={sibling distance=50mm},
					level 1/.style={sibling distance=100mm},
					]
					\node {}
					child { node {$p_1$} 
						child { node {$p_1p_2$}
							child {node {$p_1p_2p_3$}
								child {node {$p_1p_2p_3p_4$}}
								child {node {$p_1p_2p_3$}}
							}
							child {node {$p_1p_2$}
								child {node {$p_1p_2p_4$}}
								child {node {$p_1p_2$}}
							}
						}
						child { node {$p_1$}
							child {node {$p_1p_3$}
								child {node {$p_1p_3p_4$}}
								child {node {$p_1p_3$}}
							}
							child {node {$p_1$}
								child {node {$p_1p_4$}}
								child {node {$p_1$}}
							}
						}
					}
					child { node {} 
						child { node {$p_2$}
							child {node {$p_2p_3$}
								child {node {$p_2p_3p_4$}}
								child {node {$p_2p_3$}}
							}
							child {node {$p_2$}
								child {node {$p_2p_4$}}
								child {node {$p_2$}}
							}
						}
						child { node {}
							child {node {$p_3$}
								child {node {$p_3p_4$}}
								child {node {$p_3$}}
							}
							child {node {}
								child {node {$p_4$}}
								child {node {}}
							}
						}
					};
			\end{tikzpicture}}
		\end{center}
		
		\caption{Binary tree for a TQBF game for $n=2$ (i.e., variables $p_1, p_2, p_3, p_4$. The value of $p_1$ is chosen at the root. The value of $p_2$ is chosen at depth 1, etc.}
		\label{fig:binarytreeforTQBF}
	\end{figure*}
	
\end{proof}

\subsubsection{Proof of \Cref{lemma:GMLonegrapharitytwo}}

%\begin{lemma}
%The satisfiability of $\fragmentGMLmodallogic$ on graphs of arity at most $2$ is PSPACE-hard.
%\label{lemma:GMLonegrapharitytwo}
%\end{lemma}

\begin{proof}
\fbox{$\fragmentGMLmodallogic$ on graphs of out-degree at most 2}
We reduce the satisfiability problem of $\Ktwoa$ (which is PSPACE-hard, see \Cref{lemma:K2a}) to the satisfiability problem of $\fragmentGMLmodallogic$ on graphs of out-degree at most 2.
We translate modal logic into GML as follows: $tr(p) = p$ and $tr(\lbox \phi) = \lnot \gmlbox 1 \lnot tr(\phi)$. In other words, $\lbox \phi$ is translated as `it is false that there is no successor with $tr(\phi)$ false'.

\fbox{$\fragmentGML$}
We reduce the satisfiability problem of $\Ktwoa$ (which is PSPACE-hard, see \Cref{lemma:K2a}) to the satisfiability problem of $\fragmentGML$.
We can prove that  we can express that the out-degree is bounded by $2$ in $\fragmentGML$.

	We then define $\tau(\phi) = tr(\phi) \land \lbigand_{i=0}^d (\lnot \gmlbox 1\lnot )^i \lnot \gmlbox 3 \top$
    where $d$ is the modal depth of $\phi$, and where $tr$ is the translation given in the proof of \Cref{lemma:GMLonegrapharitytwo}. 
    
    The subformula $\lnot \gmlbox 3 \top$ says that it is false that there are at least~3 successors.
	The subformula $\lbigand_{i=0}^d (\lnot \gmlbox 1\lnot )^i \lnot \gmlbox 3 \top$ says that the vertices should have at most 2 successors, up to the modal depth of $\phi$.
	
	Now it can be proven by induction on $\phi$ that 
	$\phi$ is $\Ktwoa$-satisfiable iff $tr(\phi)$ is $\fragmentGML$-satisfiable.
\end{proof}

\subsection{Proof of \Cref{corollary-satthelogicPSPACEc}}

\begin{proof}
      PSPACE membership is stated in Theorem~\ref{th:satlogPSPACE}. Hardness is proven by the poly-time reduction (see Theorem~\ref{th:reduction}) from LVP (which is PSPACE-hard in this context, see \Cref{theorem:LVPPSPACEcomplete}) to the satisfiability problem of~\thelogic{}. 
\end{proof}

\begin{corollary}
The satisfiability problem of \thelogic{} is PSPACE-complete, and already PSPACE-hard in the restrictive cases and if the activation function is ReLU.
\end{corollary}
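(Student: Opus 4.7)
The plan is to establish the two directions of PSPACE-completeness separately and then lift hardness to the restrictive regimes. For PSPACE-membership, I would simply cite Theorem~\ref{th:satlogPSPACE}, which already settles this direction uniformly in the arity representation (unary, binary, or unbounded). No additional argument is needed, since the tableau/backtracking procedure described there works for arbitrary reasonable activation functions, including ReLU and truncated ReLU, as long as the polyspace assumption on $\setnumbers_n$ holds.

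For PSPACE-hardness, my plan is to chain together two results already established in the paper: Theorem~\ref{theorem:LVPPSPACEcomplete}, which states that LVP itself is PSPACE-hard in each of the restrictive cases \casezerotwoatwo{} and \caseminustwothree, and Theorem~\ref{th:reduction}, which gives a polynomial-time map $I \mapsto \varphi_I$ such that the LVP instance $I$ is valid iff $\varphi_I$ is unsatisfiable. Equivalently, $I$ is \emph{invalid} iff $\varphi_I$ is satisfiable, so this is a polynomial-time many-one reduction from the complement $\overline{\text{LVP}}$ to the satisfiability problem of \thelogic{}. Since PSPACE is closed under complementation, $\overline{\text{LVP}}$ is also PSPACE-hard, and therefore satisfiability of \thelogic{} inherits PSPACE-hardness.

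To transfer hardness to the restrictive regimes named in the statement, I would argue that the reduction in Theorem~\ref{th:reduction} is restriction-preserving: the formula $\varphi_I$ is assembled by simply unfolding the FNNs $N_{\comb_i}$ and $N_{\mathit{out}}$ using the expression constructors $c$, $c\expression$, $+$, $\activationfunction(\cdot)$, and $\agg(\cdot)$, and then conjoining the linear constraints in $L_{\text{in}}$ and the negation of the disjunction over $L_{\text{out}}$. In particular, (i) every constant occurring in $\varphi_I$ already occurs in $N$, $L_{\text{in}}$, or $L_{\text{out}}$, so if those live in $\{0,1,2\}$ or in $\{-2,\dots,3\}$ with the saturation assumption, then so does $\varphi_I$; (ii) the activation symbol $\activationfunction$ in $\varphi_I$ is interpreted by the very activation function of $N$, so a hardness reduction producing ReLU-only GNNs yields a hardness reduction into \thelogic{} with $\semone{\activationfunction}=\mathit{ReLU}$; and (iii) the arity bound $\arity$ of the LVP instance transfers directly to the arity parameter of the satisfiability instance.

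The only step that requires genuine care, and thus the main obstacle, is verifying this last preservation claim on the GNNs produced by the reduction in Theorem~\ref{theorem:LVPPSPACEcomplete}: one must check that when that proof instantiates Barceló et al.'s translation $\tau(\phi)$ from graded modal logic to GNNs, the resulting GNN uses only coefficients and biases drawn from the relevant restricted set (e.g.\ $\{-1,0,1\}$ for weights and $\{-K{+}1,\dots,1\}$ for biases) and is \emph{safe} in $\setnumbers$ in the sense already discussed in the proof of Theorem~\ref{theorem:LVPPSPACEcomplete}. Once that invariant is in hand, composing with the restriction-preserving reduction of Theorem~\ref{th:reduction} closes the argument and yields PSPACE-hardness of satisfiability of \thelogic{} in exactly the restrictive cases stated, and in particular for the ReLU activation function.
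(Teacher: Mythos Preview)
Your treatment of PSPACE-membership and of PSPACE-hardness in the restrictive cases \casezerotwoatwo{} and \caseminustwothree{} is fine and essentially matches the paper: cite Theorem~\ref{th:satlogPSPACE} for the upper bound, and compose Theorem~\ref{theorem:LVPPSPACEcomplete} with Theorem~\ref{th:reduction} (plus closure of PSPACE under complement) for the lower bound. That part is not where the content of this corollary lies.

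The genuine gap is in your handling of the ReLU clause. You write that ``a hardness reduction producing ReLU-only GNNs yields a hardness reduction into \thelogic{} with $\semone{\activationfunction}=\mathit{ReLU}$'' and then defer to ``checking'' that the GNNs $\tau(\phi)$ from the Barcel\'o et al.\ construction have this property. They do not: the proof of Theorem~\ref{theorem:LVPPSPACEcomplete} states explicitly that in $\tau(\phi)$ the activation function is \emph{truncated} ReLU, and the correctness of that construction hinges on this choice (it is what forces all intermediate vectors $\bs{x}^i_u$ to stay $0$--$1$, so that the aggregation counts are clipped and the Boolean semantics of graded modal logic is recovered). With plain ReLU the clipping at $1$ disappears, the $0$--$1$ invariant breaks, and the equivalence between $\tau(\phi)$ and $\phi$ no longer holds. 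So your ``main obstacle'' is not a routine coefficient check; it is an actual failure of the approach you outline.

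The paper sidesteps this entirely by working at the formula level rather than at the GNN level. The key observation is the identity
\[
\mathit{truncatedReLU}(x) \;=\; \mathit{ReLU}\bigl(\mathit{ReLU}(x) - \mathit{ReLU}(x-1)\bigr),
\]
which lets one reduce satisfiability of \thelogic{} with $\semone{\activationfunction}=\mathit{truncatedReLU}$ to satisfiability of \thelogic{} with $\semone{\activationfunction}=\mathit{ReLU}$: replace every subexpression $\activationfunction(E)$ by $\activationfunction(\activationfunction(E)-\activationfunction(E-1))$. Because formulas are DAGs, the shared subexpression $E$ is not duplicated and the transformation is polynomial. Hardness for truncated ReLU (already established) then transfers directly to ReLU. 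This is the missing idea in your proposal.
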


\begin{proof}
    It suffices to observe that truncated ReLU can be simulated as follows:
    $truncatedReLU(x) := RelU(ReLU(x) - ReLU(x-1))$.
    So we can reduce the satisfiability problem of \thelogic{} with truncated ReLU to the satisfiability problem of \thelogic{} with ReLU. To do that, we replace each occurrence of $\activationfunction(E)$ (where $\activationfunction$ was previously interpreted as truncated ReLU) by $\activationfunction(\activationfunction(E) - \activationfunction(E-1))$ (where now $\activationfunction$ is interpreted as ReLU). This operation does not blow up the size of the formula since formulas are represented by DAGs. 
\end{proof}

\section{Discussion on Other Aggregation Functions}
\label{appendix-section-aggregation}

In this section, we show how our methodology can be easily applied for other aggregation functions, like those, for instance, compared in \citet{RosenbluthTG23}. The idea is to replace the $\agg$ operator.
\paragraph{Weighted sum in aggregation functions}

%https://cs.mcgill.ca/~wlh/comp766/files/chapter4_draft_mar29.pdf
%Veliˇckovi´c et al. [2017]’s Graph Attention Network (GAT), 
In variants of GNN, the summation in the aggregation function is weighted (e.g. \cite{DBLP:conf/iclr/VelickovicCCRLB18}). Assuming that the weights are fixed for each edge, we can replace operator $\agg$ with $\agg_{\textsf w_1\dots \textsf w_\delta}$ where  $\textsf w_1\dots \textsf w_\delta$ are weights. The rule will then have the form:
\begin{prooftree}
\AxiomC{$(w~\agg_{\textsf w_1\dots \textsf w_\delta}(\expression) = k)$}
\AxiomC{$(w~degree=\delta')$}
\rulelabel{\agg_{\vec{\textsf w}}}
\BinaryInfC{
\begin{minipage}{0.4\textwidth}
$(w1~\expression = k_{1}), \dotsc, (w \degree'~\expression = k_{{\degree'}})$ for some $(k_u)_{u=1..\degree'}$, with $\textsf w_1 k_{1} +_{\setnumbers_n} 
 \dotsb +_{\setnumbers_n} \textsf w_{\degree'} k_{\degree'} = k$
\end{minipage}
}
\end{prooftree}

\paragraph{Max-GNNs}

In Max-GNNs (\cite{DBLP:conf/kr/CucalaG24}), the aggregation function is the maximum. Our framework can be adapted by replacing the operator $\agg$ with $\max$. We then use the following rule:
\begin{prooftree}
\AxiomC{$(w~\max(\expression) = k)$}
\AxiomC{$(w~degree=\delta')$}
\rulelabel{\max}
\BinaryInfC{
\begin{minipage}{0.4\textwidth}
$(w1~\expression = k_{1}), \dotsc, (w \degree'~\expression = k_{{\degree'}})$ for some $(k_u)_{u=1..\degree'}$, with $\max(k_{1},
 \dotsb, k_{\degree'}) = k$
\end{minipage}
}
\end{prooftree}

\paragraph{Mean-GNNs}

In Mean-GNNs (\cite{RosenbluthTG23}), the aggregation function is the average. We can adapt our framework replacing the operator $\agg$ with $mean$.
It can scale better than sum aggregation across nodes with varying out-degrees. 
The behavior of this aggregation function is captured by the following rule:
\begin{prooftree}
\AxiomC{$(w~mean(\expression) = k)$}
\AxiomC{$(w~degree=\delta')$}
\rulelabel{mean}
\BinaryInfC{
\begin{minipage}{0.4\textwidth}
$(w1~\expression = k_{1}), \dotsc, (w \degree'~\expression = k_{{\degree'}})$ for some $(k_u)_{u=1..\degree'}$, with $(k_{1} +_{\setnumbers_n} 
 \dotsb +_{\setnumbers_n} k_{\degree'}) /_{\setnumbers_n} \degree' = k$
\end{minipage}
}
\end{prooftree}

\medskip
In every case, this yields a PSPACE decision procedures for satisfiability and validity checking. In the case of weighted sum, this complexity is tight.

\section{Implementation}
\label{appendix-section-implementation}

\newcommand{\implnumbers}[1]{\mathbb{Z}_{[-#1,#1]}}

We provide a Python implementation\footnote{\url{https://github.com/francoisschwarzentruber/ijcai2025-verifquantgnn}} of our tableau method where the set of numbers is of the form $\implnumbers{a} := \set{-a, -a+1, \dots, 0, 1, \dots, a-1, a}$, with the amplitude $a \in \mathbb{N}$, and where overflows are handled as follows: $x+_{\implnumbers{a}} y = \max(-a, \min(a, x+y))$.
We also assume $\degree = 2$.

The implementation relies on pattern matching available in Python 3.10 for implementing the tableau rules.

Formulas are written by means of Python tuples. Here is an example of an input formula \texttt{phi}:

\begin{scriptsize}
\begin{verbatim}
phiIn = (("agg", "x1"), ">=", 1)
phiN = (
    ((("ReLU", ("*", 2, "x1")), "-", "y1"), "=", 0),
    "and",
    ((("*", 2, ("agg", "x1")), "-", "y2"), "=", 0)
)
phiOut = ("y1", ">=", 1)
phi = ((phiIn, "and", phiN), "and", ("not", phiOut))
\end{verbatim}
\end{scriptsize}

The program starts with a single set of terms:

\begin{scriptsize}
\begin{verbatim}
{(1, phi)}
\end{verbatim}
\end{scriptsize}

Then the program applies the rules:

\begin{scriptsize}
\begin{verbatim}
initialization:  {(1, (((('agg', 'x1'), '>=', 1), 
'and', (((('ReLU', ('*', 2, 'x1')), '-', 'y1'), '=', 0),
'and', ((('*', 2, ('agg', 'x1')), '-', 'y2'), '=', 0))),
'and', ('not', ('y1', '>=', 1))))}
apply ruleAnd : {(1, ((('agg', 'x1'), '>=', 1), 'and', 
(((('ReLU', ('*', 2, 'x1')), '-', 'y1'), '=', 0), 'and',
((('*', 2, ('agg', 'x1')), '-', 'y2'), '=', 0)))),
(1, ('not', ('y1', '>=', 1)))}
apply ruleAnd : {(1, (('agg', 'x1'), '>=', 1)),
(1, (((('ReLU', ('*', 2, 'x1')), '-', 'y1'), '=', 0), 
'and', ((('*', 2, ('agg', 'x1')), '-', 'y2'), '=', 0))),
(1, ('not', ('y1', '>=', 1)))}
apply ruleAnd : {(1, (('agg', 'x1'), '>=', 1)),
(1, ((('ReLU', ('*', 2, 'x1')), '-', 'y1'), '=', 0)),
(1, ((('*', 2, ('agg', 'x1')), '-', 'y2'), '=', 0)),
(1, ('not', ('y1', '>=', 1)))}
apply ruleNot : 
{(1, ((('*', 2, ('agg', 'x1')), '-', 'y2'), '=', 0)), 
(1, (('*', -1, 'y1'), '>=', -1)),
(1, ((('ReLU', ('*', 2, 'x1')), '-', 'y1'), '=', 0)),
(1, (('agg', 'x1'), '>=', 1))}
new branch:
{(1, ((('*', 2, ('agg', 'x1')), '-', 'y2'), '=', 0)),
(1, (('*', -1, 'y1'), '=', -1)), 
(1, ((('ReLU', ('*', 2, 'x1')), '-', 'y1'), '=', 0)), 
(1, (('agg', 'x1'), '>=', 1))}
new branch:  
{(1, ((('*', 2, ('agg', 'x1')), '-', 'y2'), '=', 0)),
(1, ((('ReLU', ('*', 2, 'x1')), '-', 'y1'), '=', 0)),
(1, (('*', -1, 'y1'), '=', 0)), (1, (('agg', 'x1'), '>=', 1))}
...
\end{verbatim}
\end{scriptsize}

On that example, the program stops by giving a set of terms:

\begin{scriptsize}
\begin{verbatim}
no more rules applicable
example of final tableau: 
{(12, ('x1', '=', 4)),
(11, ('x1', '=', -3)),
(1, ('x1', '=', -2)),
(1, ('y1', '=', 0)),
(1, ('y2', '=', 2))}
\end{verbatim}
\end{scriptsize}
from which we get a pointed labelled graph
\begin{center}
    \begin{tikzpicture}
        \node (1) {$1: x1 = -2, y1 = 0, y2 = 2$};
        \node (11) at (-2, -2) {$11: x1 = -3$};
        \node (12) at (2, -2) {$12: x1 = 4$};
        \draw (1) edge[->] (11);
        \draw (1) edge[->] (12);
    \end{tikzpicture}
\end{center}
that satisfies the input formula.

% In \Cref{fig:runtime}, we report 
We report a basic performance evaluation on a MacBook Apple M3 Pro with 36GB memory. It displays the time for running the tableau method starting with the formula \texttt{phi}, for varying values of $a$, hence using numbers in $\implnumbers{a}$.
%\begin{figure}

\medskip

\begin{tikzpicture}[]
    \begin{axis}[
        xlabel={Value of the amplitude $a$},
        ylabel={Total time (seconds)},
        grid=both,
        grid style={dashed, gray!30},
        ymin=0,
        xmin=1,
        xtick={1,2,3,4,5,6,7,8,9,10,11,12,13,14,15,16,17,18,19,20},
        xticklabel style={rotate=45, anchor=east},
        ticklabel style={font=\tiny},
        width=8cm,
        height=5cm,
        title={}
    ]
        \addplot[
            thick,
            color=blue,
            mark=*,
            mark options={solid, fill=blue}
        ]
        coordinates {
            (1, 0.040)
            (2, 0.042)
            (3, 0.048)
            (4, 0.064)
            (5, 0.088)
            (6, 0.201)
            (7, 0.299)
            (8, 0.900)
            (9, 1.290)
            (10, 3.345)
            (11, 4.978)
            (12, 11.353)
            (13, 15.012)
            (14, 32.158)
            (15, 40.698)
            (16, 81.39)
            (17, 98.33)
            (18, 167.00)
            (19, 195.67)
            (20, 342.18)
        };
    \end{axis}
\end{tikzpicture}

%     \caption{Running time for running the tableau method with the formula \texttt{phi} for varying values of $a$, hence using numbers in $\setnumbers_a$.\label{fig:runtime}}
% \end{figure}

\end{document}